\documentclass[lettersize,journal]{IEEEtran}
\usepackage{amsmath,amsfonts}
\usepackage{mathrsfs}

\usepackage{algorithm}
\usepackage{algpseudocode}
\usepackage{algorithm}
\usepackage{tabularx}
\usepackage{array}

\usepackage{caption}
\usepackage{float} 
\usepackage{subcaption}

\usepackage{color}
\usepackage{graphicx}
\usepackage{textcomp}
\usepackage{stfloats}
\usepackage{url}
\usepackage{amsthm,amsmath,amssymb}
\usepackage{verbatim}
\usepackage{cite}
\usepackage{blkarray}
\newtheorem{theorem}{Theorem}
\newtheorem{assumption}{Assumption}
\newtheorem{condition}{Condition}

\newtheorem{lemma}[theorem]{Lemma}

\newtheorem*{remark}{Remark}
\newtheorem{definition}{Definition}
\begin{document}

\title{Analysis of Efficient Scheduling in Layered \\ Decoding of GLDPC Codes}

\author{Qingqing Peng, Dongxu Chang, Guiying Yan, Guanghui Wang
\thanks{This work is partially supported by the National Key R\&D Program of China, (2023YFA1009600).

Q. Peng, D. Chang, and G. Wang are with the School of Mathematics, Shandong University, Shandong, China (e-mail:  pqing@mail.sdu.edu.cn; 
dongxuchangmail@163.com;
ghwang@sdu.edu.cn;).

G. Yan is with the Academy of Mathematics and Systems Science, CAS, University of Chinese Academy of Sciences, Beijing, 100190 China (e-mail: yangy@amss.ac.cn).}
}

\markboth{Journal of \LaTeX\ Class Files,~Vol.~14, No.~8, August~2024}%
{Shell \MakeLowercase{\textit{et al.}}: A Sample Article Using IEEEtran.cls for IEEE Journals}


\maketitle

\begin{abstract}
 In this study, we investigate the characteristics of scheduling sequences that enable efficient decoding of generalized low-density parity-check (GLDPC) codes under the layered message-passing algorithm. In particular, we show that scheduling sequences leading to higher decoding efficiency should prioritize the update of constraint nodes corresponding to subcodes with larger minimum distance, fewer minimum-weight codewords, and shorter code length. Based on these characteristics, we design a scheduling algorithm, which further demonstrates the effectiveness of these characteristics through simulation experiments.
\end{abstract}

\begin{IEEEkeywords}
 Generalized low-density parity-check, scheduling strategy, a posteriori probability decoder, decoding efficiency 
\end{IEEEkeywords}

\section{Introduction}
\IEEEPARstart{A}{s} a variant of the low-density parity-check (LDPC) codes introduced by Gallager \cite{gallager1962low}, generalized low-density parity-check (GLDPC) codes were first proposed by Tanner \cite{tanner1981recursive}. While retaining the sparse graph representation, GLDPC codes replace single parity-check (SPC) nodes with error-correcting block subcodes, known as generalized constraint (GC) nodes. The degrees of GC nodes match the length of their associated subcodes. GLDPC codes offer advantages such as the ability to employ more powerful decoders at GC nodes during decoding, resulting in improved performance \cite{liu2019probabilistic,chang2025gap}, faster convergence \cite{mulholland2015design}, and reduced error floor \cite{liva2008quasi,mitchell2013minimum}.

However, for GLDPC codes, the use of powerful decoders at GC nodes, such as a posteriori probability (APP) decoders, entails much higher computational complexity, thereby motivating the need for improved decoding efficiency. An effective approach for improving the efficiency of message-passing decoders is the use of the layered strategy, first proposed by D. Hocevar \cite{hocevar2004reduced}. Unlike the flooding strategy adopted by conventional message-passing decoders, which updates all variable-to-check (V2C) or check-to-variable (C2V) messages simultaneously in a single iteration, layered message-passing decoding updates messages sequentially. This sequential update can reduce decoding complexity by nearly $50\%$ while achieving comparable performance.

Despite its benefits, the layered message-passing algorithm does not have a specific design for the order of the decoding sequence. Many studies have focused on optimizing the order of decoding sequences to further reduce the decoding complexity of conventional LDPC codes. For instance, previous studies have demonstrated that prioritizing the update of constraint nodes with lower degree \cite{frenzel2019static}, or those connected to the least-punctured edges and to variable nodes with the highest-degree (LPHD) \cite{wang2020two}, can lead to more efficient decoding schedules. In \cite{chang2025analysis}, it is suggested that a ``good" scheduling sequence should prioritize updating constraint nodes with lower error probabilities aggregated from neighboring variable nodes. Building on this principle, the authors further demonstrate that certain observable features of ``good'' scheduling sequences—such as row weights, rows connected to punctured columns, and column weights can serve as corollaries of this characteristic.

Although many scheduling algorithms have been proposed to reduce decoding complexity, most of them are developed based on conventional LDPC codes \cite{frenzel2019static,wang2020two,chang2025analysis,tian2022novel,chang2024analysis}, with limited investigations conducted for GLDPC codes. In this paper, we investigate scheduling strategies for GLDPC codes under layered message-passing decoding. In contrast to conventional LDPC codes, the presence of GC nodes introduces additional structural factors, such as minimum distance and weight distribution, which influence decoding performance in a non-trivial manner. Over the binary-input additive white Gaussian noise (BI-AWGN) channel and the binary
erasure channel (BEC), we show that effective scheduling should prioritize constraint nodes with larger minimum distance, fewer minimum-weight codewords, and shorter code length, with minimum distance being the dominant factor. Based on this conclusion, we propose a scheduling algorithm that prioritizes constraint nodes based on the ordering of their associated subcodes with respect to minimum distance, the number of minimum-weight codewords, and code length, with minimum distance serving as the dominant consideration. Simulation results further verify the effectiveness of the proposed strategy.

\section{Preliminaries}
\subsection{LDPC code and GLDPC Code Ensembles}

In \cite{tanner1981recursive}, Tanner generalized LDPC codes by introducing the use of block codes as constraint nodes, which are referred to as GC nodes. Each GC node is associated with a subcode, and the variable nodes connected to the GC node are required to satisfy the parity-check constraints of that subcode, rather than the simple parity constraint of a SPC node.


Let {$G_{R_M}$} be an LDPC code with $M$ constraint nodes, and let $\mathcal{C}^{(1)}, \ldots, \mathcal{C}^{(M)}$ denote a family of linear codes (subcodes) such that the length of $\mathcal{C}^{(i)}$ equals the degree of the $i$th constraint node in {$G_{R_M}$}. The $({G_{R_M}}, \mathcal{C}^{(1)}, \ldots, \mathcal{C}^{(M)})$ GLDPC ensemble is defined as follows:

\begin{definition}[GLDPC code ensemble]
In the $i$th row of the parity-check matrix of ${G_{R_M}}$, each entry equal to 1 is replaced by a distinct column of the parity-check matrix of $\mathcal{C}^{(i)}$, with each column used exactly once, while each entry equal to 0 is replaced by a zero column vector. By randomly assigning the columns of the parity-check matrix of $\mathcal{C}^{(i)}$ to the positions corresponding to the variable nodes, one obtains an element of the $({G_{R_M}}, \mathcal{C}^{(1)}, \ldots, \mathcal{C}^{(M)})$ GLDPC ensemble.
    \label{def_ensemble}
\end{definition}

Note that although GC node constraints are represented in the parity-check matrix by replacing them with the corresponding subcode parity-check matrices, each GC node is treated as a single entity during decoding. $\mathcal{C}^{(i)}$ may be chosen as an SPC code. When $\mathcal{C}^{(1)}, \ldots, \mathcal{C}^{(M)}$ are all SPC codes, the resulting random GLDPC code reduces to the LDPC code ${G_{R_M}}$.

\subsection{Flooding Message-Passing Decoder}
Assume that the codeword $\boldsymbol{c}=\left\{c_1,\cdots,c_N\right\}\in \left\{0,1\right\}^N$ is transmitted through a binary memoryless symmetric (BMS) channel, and $\boldsymbol{y}=\left\{y_1,\cdots,y_N\right\}$ is the received sequence. Let
\[
		L_\alpha^{(0)}=\ln{\frac{P(y_\alpha|c_\alpha=0)}{P(y_\alpha|c_\alpha=1)}}
\]
denote the LLR associated with the $\alpha$-th variable node.

In LDPC codes, the flooding message-passing algorithm proceeds through iterative message exchanges between variable nodes and check nodes, referred to as V2C and C2V message updates, respectively. The update rule for the V2C messages is
\begin{equation}
		L_{\alpha \rightarrow \eta}^{(l)}=L_{\alpha}^{(0)}+\sum_{g \in \mathcal{N}_\alpha \backslash \eta} L_{g \rightarrow \alpha}^{(l-1)},
  \label{V2C message update_awgn}
	\end{equation}
The message update rule of the C2V message is 
 \begin{equation}
		L_{\eta \rightarrow \alpha}^{(l)}=2 \tanh ^{-1}\left(\prod_{h \in \mathcal{N}_\eta \backslash \alpha} \tanh \left(L_{h \rightarrow \eta}^{(l)} / 2\right)\right)
  \label{C2V message update_awgn}
	\end{equation}
where $l$ is the number of iterations, $\mathcal{N}_v$ represents the nodes connected directly to node $v$, $\alpha \rightarrow \eta$ means from variable node $\alpha$ to constraint node $\eta$ and $\eta \rightarrow \alpha$ means from constraint node $\eta$ to variable node $\alpha$. The initial message are given by $L_{\alpha \rightarrow \eta}^{(0)}=0$ and $L_{\eta \rightarrow \alpha}^{(0)}=0$.

 
GLDPC codes are also decoded using a message-passing algorithm consisting of V2C and C2V message updates. The V2C update rule is identical to that of LDPC codes and can be computed according to (\ref{V2C message update_awgn}). However, the constraint nodes in GLDPC codes include both SPC nodes and GC nodes, and the C2V message update rules depend on the type of constraint node. For SPC nodes, the outgoing messages are determined based on the received messages and the corresponding SPC code constraints, and the C2V update rule is the same as that in (\ref{C2V message update_awgn}). In contrast, for the GC nodes, the transmission of messages is managed by their corresponding subcodes and decoded using the a posteriori probability (APP) decoder. The decoding rules of the APP decoder are as follows \cite{richardson2008modern}: For a GC node $C$ with subcode $\mathcal{C}$, let $v_i$ denotes its $i$th neighboring variable node, $\boldsymbol{L}_{\sim i}^{(l)}$ denotes the message received from its neighboring variable nodes other than $v_i$ at iteration $l$, and $\boldsymbol{c}'$ denotes the codewords of $\mathcal{C}$. The message to $v_i$ from $C$ under the APP decoder is
\begin{equation}
\begin{split}
    L_{C\rightarrow v_i}^{(l)}& = \ln \frac{P(\boldsymbol{L}_{\thicksim i}^{(l)}|v_i=0)}{P(\boldsymbol{L}_{\thicksim i}^{(l)}|v_i=1)}\\
    &=\ln \frac{\sum\limits_{\boldsymbol{c}':c'_i=0}\exp(\sum\limits_{j\neq i} (1-c'_j)L_{v_j \rightarrow C}^{(l)}) }{\sum\limits_{\boldsymbol{c}':c'_i=1}\exp (\sum\limits_{j\neq i} (1-c'_j)L_{v_j \rightarrow C}^{(l)})}\\
    &\approx \min_{{\boldsymbol{c}':c'_i=1}}\sum_{j \neq i}c'_jL_{v_j \rightarrow C}^{(l)}-\min_{{\boldsymbol{c}':c'_i=0}}\sum_{j \neq i}c'_jL_{v_j \rightarrow C}^{(l)}
\end{split}
\label{APP_awgn}
\end{equation}

Note that an SPC node and its neighboring variable nodes form a subcode with only a single parity check, which therefore has a minimum distance of 2. In this case, update rules (\ref{C2V message update_awgn}) and (\ref{APP_awgn}) coincide.

\subsection{Layered Message-passing Decoder and Scheduling Sequences}
In each decoding iteration, the layered message-passing decoder \cite{hocevar2004reduced} sequentially selects constraint nodes in the graph and updates the information on all edges connected to the selected constraint node. Specifically, it performs V2C message updates along all edges connected to the selected constraint node, followed by C2V message updates. The order in which constraint nodes are selected for updating in the layered message-passing decoder is referred to as the scheduling sequence, defined as follows:

\begin{definition}[scheduling sequences]
	The scheduling sequence is a sequence composed of the indices of constraint nodes. It represents the order in which constraint nodes are selected for decoding in the layered message-passing decoder, with the decoder selecting constraint nodes for decoding according to the order in which they appear in the scheduling sequences.
\end{definition}
In the subsequent analysis, assuming that the same message-passing order is applied in each iteration, the scheduling sequence can be defined solely by the update order of the constraint nodes in the first iteration.

\section{Scheduling Characteristics under BI-AWGN channel}

Given different scheduling sequences, the same number of message-passing iterations may lead to different performance. As a result, different constraint nodes should be scheduled at distinct positions to maximize their contribution to the message-passing process. We claim that the optimal scheduling position of a constraint node is largely influenced by the minimum distance of its associated subcode, the number of its minimum-weight codewords, and its code length.

To ensure clarity in the subsequent description and proof, for any GLDPC code $\mathcal{G}$, we assume that decoding takes place at any two adjacent constraint nodes $a$ and $b$ of $\mathcal{G}$ in the scheduling sequence. For the subcodes associated with these two constraint nodes, we use superscripts or subscripts $a$ and $b$ to distinguish their respective parameters (e.g., subcode $\mathcal{C}$, minimum distance $d_{\min}$, code length $n$, etc.). Let $v_i^a$ and $v_j^b$ denote the $i$-th and $j$-th variable nodes connected to constraint nodes $a$ and $b$, respectively. Without loss of generality, assume that $v_i^a = v_i^b$ for $i \in \{1, \ldots, n_{ab}\}$, indicating that these variable nodes are simultaneously connected to both constraint nodes $a$ and $b$.

Before investigating the impact of the update order between two constraint nodes that are adjacent in the scheduling sequence on the decoding performance, it is necessary to first analyze the effect of updating a single constraint node. To simplify the analysis, the subsequent discussion will be conducted under the conditions of Assumption \ref{assump_1_awgn}.

\begin{assumption}
On the BI-AWGN channel, assume that the
messages are passed during the decoding process of GLDPC codes
follow Gaussian distributions. In particular, the message \( L_{v_i^a \rightarrow a} \) is modeled as a Gaussian random variable with mean \( u_i^a \) and variance \(2u_i^a \) \cite{chung2001analysis}. The corresponding error probability of this message is given by \(P_{v_i^a \rightarrow a} = Q\left(\sqrt{\frac{u_i^a}{2}}\right).\) Messages from different variable nodes are assumed to be independent and non-zero, with error probabilities tending to zero. Assume that the all-zero codeword is transmitted.
\label{assump_1_awgn}
\end{assumption}

\begin{lemma}
For GLDPC codes that satisfy Assumption \ref{assump_1_awgn}, the erasure probability of the message passed from $a$ to $v_i^a$, denoted by $P_{a \rightarrow v_i^a}$, is $|R_a(\boldsymbol{0})|Q(r_a(\boldsymbol{0}))+o(Q(r_a(\boldsymbol{0})))$, where $r_a(\boldsymbol{z})$ and $R_a(\boldsymbol{z})$ are defined as follows,
\begin{equation}
    r_a(\boldsymbol{z}) = \min_{\boldsymbol{c}\in \mathcal{C}_a:c_i=1} \frac{\sum_{j \neq i}(c_j-z_j)u_j^a}{\sqrt{\sum_{j \neq i}(c_j-z_j)^22u_j^a}}
\end{equation}
\begin{equation}
    R_a(\boldsymbol{z}) = \bigg\{\boldsymbol{c }\in \mathcal{C}_a:c_i=1,\frac{\sum_{j \neq i}(c_j-z_j)u_j^a}{\sqrt{\sum_{j \neq i}(c_j-z_j)^22u_j^a}} =r^a(\boldsymbol{z})\bigg\}
\end{equation}
\label{lemma1_awgn}
\end{lemma}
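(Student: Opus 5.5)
We work with the min-sum form of the APP rule (\ref{APP_awgn}), consistent with the approximation used throughout the paper. Under the all-zero codeword, $L_{a\rightarrow v_i^a}$ is in error when the minimum over $\boldsymbol{c}'$ with $c'_i=1$ of $\sum_{j\neq i}c'_jL_{v_j^a\rightarrow a}$ is smaller than the minimum over $\boldsymbol{c}'$ with $c'_i=0$. The all-zero word is in the second set and gives value $0$, so the second minimum is at most $0$. Taking $\boldsymbol{z}=\boldsymbol{0}$ as the reference word, the error event is therefore (up to the tie-breaking convention, which has probability zero) contained in
\[
E=\bigcup_{\boldsymbol{c}\in\mathcal{C}_a:\,c_i=1} E_{\boldsymbol{c}},\qquad E_{\boldsymbol{c}}=\Big\{\textstyle\sum_{j\neq i}c_jL_{v_j^a\rightarrow a}\le 0\Big\}.
\]
The plan is to show that $P(E)$ is asymptotically the stated quantity and that the error event differs from $E$ only by lower-order terms.

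\textbf{Single-codeword events.} By Assumption \ref{assump_1_awgn} the messages are independent Gaussians with $L_{v_j^a\rightarrow a}\sim\mathcal{N}(u_j^a,2u_j^a)$. Hence $S_{\boldsymbol{c}}=\sum_{j\neq i}c_jL_{v_j^a\rightarrow a}$ is Gaussian with mean $\sum_{j\neq i} c_ju_j^a$ and variance $\sum_{j\neq i} c_j^2\,2u_j^a$. This gives
\[
P(E_{\boldsymbol{c}})=Q\!\left(\frac{\sum_{j\neq i}c_ju_j^a}{\sqrt{\sum_{j\neq i}c_j^2 2u_j^a}}\right),
\]
which is exactly the ratio appearing in the definition of $r_a(\boldsymbol{0})$.

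\textbf{Union bound from above.} Summing over codewords,
\[
P(E)\le\sum_{\boldsymbol{c}} P(E_{\boldsymbol{c}})=|R_a(\boldsymbol{0})|\,Q(r_a(\boldsymbol{0}))+\sum_{\boldsymbol{c}\notin R_a} Q(r_{\boldsymbol{c}}),
\]
where $r_{\boldsymbol{c}}>r_a(\boldsymbol{0})$ for every $\boldsymbol{c}\notin R_a(\boldsymbol{0})$. Error probabilities tend to zero, so $u_j^a\to\infty$ and all arguments grow. The Gaussian tail asymptotic $Q(x)\sim e^{-x^2/2}/(x\sqrt{2\pi})$ then gives $Q(r_{\boldsymbol{c}})=o(Q(r_a))$ for every non-minimizing codeword. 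This step needs $r_{\boldsymbol{c}}^2-r_a^2\to\infty$; I would either assume the $u_j^a$ scale proportionally (fixed ratios) or state it as the natural reading of the assumption. Because $\mathcal{C}_a$ is finite, the remainder sum is $o(Q(r_a))$.

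\textbf{Inclusion--exclusion from below.} By Bonferroni,
\[
P(E)\ge\sum_{\boldsymbol{c}\in R_a}P(E_{\boldsymbol{c}})-\sum_{\boldsymbol{c}\ne\boldsymbol{c}'\in R_a}P(E_{\boldsymbol{c}}\cap E_{\boldsymbol{c}'}).
\]
It remains to show $P(E_{\boldsymbol{c}}\cap E_{\boldsymbol{c}'})=o(Q(r_a))$ for distinct minimizers. This is the main obstacle. I would handle it with the bivariate Gaussian tail: $(S_{\boldsymbol{c}},S_{\boldsymbol{c}'})$ is jointly Gaussian, and its normalized correlation $\rho$ is strictly less than $1$ because $\boldsymbol{c}\neq\boldsymbol{c}'$ have different supports and the messages are nonzero with positive variance. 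The joint tail of two standardized Gaussians both below $-r$ with $\rho<1$ decays like $\exp(-r^2/(1+\rho))$, which is $o(Q(r))$ as $r\to\infty$. One can also bound it by $P(S_{\boldsymbol{c}}+S_{\boldsymbol{c}'}\le 0)$ and show that this sum has a strictly larger normalized mean.

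\textbf{Discrepancy between $E$ and the exact error event.} Finally, I would check that replacing the $c'_i=0$ minimum by $0$ only adds events with at least two simultaneous negative partial sums. Those events are again $o(Q(r_a))$ by the same joint-tail argument. Combining the upper and lower bounds gives $P_{a\rightarrow v_i^a}=|R_a(\boldsymbol{0})|Q(r_a(\boldsymbol{0}))+o(Q(r_a(\boldsymbol{0})))$.
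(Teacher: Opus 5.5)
Your upper bound is the paper's: both replace the min-sum error event $\bigcap_{\boldsymbol x:x_i=0}\bigcup_{\boldsymbol c:c_i=1}\{S_i(\boldsymbol c,\boldsymbol x)<0\}$ by its $\boldsymbol x=\boldsymbol 0$ member. The lower bound is where you take a different route.

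The paper factors the intersection by the chain rule over $\boldsymbol 0,\boldsymbol x^{(1)},\dots$. It bounds each conditional factor below by $P(\sum_{j\neq i}x^{(1)}_jL_{v_j^a\rightarrow a}>0)=1-Q(\cdot)\to 1$, using that $S_{\boldsymbol x}>0$ together with some $S_{\boldsymbol c}<0$ forces $S_i(\boldsymbol c,\boldsymbol x)<0$. You instead write the error event as $E$ minus a discrepancy, and control both pieces by Gaussian tail and joint-tail estimates.

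Your route is longer but more careful. The paper's step (9) silently replaces the probability of $\{S_{\boldsymbol x}>0\}$, conditioned on $\bigcup_{\boldsymbol c}\{S_i(\boldsymbol c,\boldsymbol 0)<0\}$, by its unconditional value. Positive association of the independent messages actually gives $\le$ there, the wrong direction for a lower bound. What is really needed at that point is exactly your discrepancy estimate. Similarly, the paper asserts the inclusion--exclusion asymptotics without bounding the pairwise terms. It also does not note that non-minimizing codewords are negligible only if $r_{\boldsymbol c}^2-r_a(\boldsymbol 0)^2\to\infty$. You make both points explicit, and your Bonferroni step (equal thresholds, $\rho<1$) is fine.

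The one step that does not follow as written is the discrepancy step, because there the thresholds are no longer equal. The pair is $\boldsymbol c\in R_a(\boldsymbol 0)$ and a nonzero $\boldsymbol x$ with $x_i=0$. Write $C=\mathrm{supp}(\boldsymbol c)\setminus\{i\}$, $X=\mathrm{supp}(\boldsymbol x)$ and $U_A=\sum_{j\in A}u_j^a$. Then $r_{\boldsymbol c}=\sqrt{U_C/2}$ and $r_{\boldsymbol x}=\sqrt{U_X/2}$, where $r_{\boldsymbol x}$ may be smaller than $r_{\boldsymbol c}$, and $\rho=U_{C\cap X}/\sqrt{U_CU_X}$.

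For unequal thresholds, $\rho<1$ is not enough. The joint tail is $o(Q(r_{\boldsymbol c}))$ only if $r_{\boldsymbol x}>\rho\,r_{\boldsymbol c}$, i.e.\ $X\not\subseteq C$. If $X\subseteq C$, then $r_{\boldsymbol x}=\rho\,r_{\boldsymbol c}$. In that case $E[S_{\boldsymbol x}\mid S_{\boldsymbol c}=s]=sU_X/U_C\le 0$ for $s\le 0$, so $P(S_{\boldsymbol c}\le 0,S_{\boldsymbol x}\le 0)\ge\frac12P(S_{\boldsymbol c}\le 0)$, which is not negligible.

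The missing idea is the minimality of $\boldsymbol c$ against $\boldsymbol c+\boldsymbol x$. This is a codeword with $i$-th bit $1$ and $U_{C\triangle X}=U_C+U_X-2U_{C\cap X}$. Minimality of $\boldsymbol c$ therefore gives $U_{C\cap X}\le U_X/2$, so in particular $X\not\subseteq C$.

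With this, your sum-bound trick also works here. We have $(U_C+U_X)^2-U_C(U_C+U_X+2U_{C\cap X})\ge U_X^2>0$. Hence $S_{\boldsymbol c}+S_{\boldsymbol x}$ has a normalized mean exceeding $r_{\boldsymbol c}$, by a margin that grows under your scaling assumption. Pairs with $\boldsymbol c\notin R_a(\boldsymbol 0)$ are already negligible via $P(S_{\boldsymbol c}\le 0)$. With this added, your argument is complete.
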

\begin{proof}
 According to \eqref{APP_awgn}, the condition \( L_{a \rightarrow v_i^a} < 0 \) is equivalent to the following: for any codeword \( \boldsymbol{x} \in \mathcal{C}_a \) such that \( x_i = 0 \), there exists a codeword \( \boldsymbol{c} \in \mathcal{C}_a \) with \( c_i = 1 \) satisfying \(
\sum_{j \neq i} c_j L_{v_j^a \rightarrow a } < \sum_{j \neq i} x_j L_{v_j^a \rightarrow a }. 
\) Define $S_i(\boldsymbol{c},\boldsymbol{x})=\sum_{j\neq i}(c_j-x_j)L_{v_j^a \rightarrow a }.$ Then, the above condition can be equivalently expressed as
\begin{equation}
   P_{a \rightarrow v_i^a}=P\left( \mathop{\cap}_{\boldsymbol{x} \in \mathcal{C}_a:x_i=0} \mathop{\cup}_{\boldsymbol{c} \in \mathcal{C}_a:c_i=1} S_i(\boldsymbol{c},\boldsymbol{x})<0\right)
\end{equation}
Since \( L_{v_j^a \rightarrow a} \sim \mathcal{N}(u_j^a, 2u_j^a) \), it follows that
\(
\sum_{j \neq i} (c_j - x_j) L_{v_j^a \rightarrow a}
\sim \mathcal{N}\left( \sum_{j \neq i} (c_j - x_j) u_j^a, \sum_{j \neq i} (c_j - x_j)^2 \cdot 2u_j^a \right).
\) Therefore, for any such $\boldsymbol{c}$ and $\boldsymbol{x}$ , we have
\[
    P\Big( S_i(\boldsymbol{c},\boldsymbol{x}) < 0 \Big)= Q\bigg( \frac{\sum_{j \neq i} (c_j - x_j) u_j^a}{\sqrt{\sum_{j \neq i} (c_j - x_j)^2 \cdot 2u_j^a}}\bigg)
\]
For any such $\boldsymbol{x}$, we have
\[
    \begin{split}
        &P\left( \mathop{\cup}_{\boldsymbol{c} \in \mathcal{C}_a:c_i=1} S_i(\boldsymbol{c},\boldsymbol{x}) < 0\right)=\sum_{\boldsymbol{c} \in \mathcal{C}_a:c_i=1} P\left( S_i(\boldsymbol{c},\boldsymbol{x}) < 0\right)\\
        &-\sum_{\boldsymbol{c},\boldsymbol{c}'\in \mathcal{C}_a:c_i=1,c_i'=1} P\left(S_i(\boldsymbol{c},\boldsymbol{x}) < 0,S_i(\boldsymbol{c}',\boldsymbol{x}) < 0\right)+\ldots\\
        &=|R_a(\boldsymbol{x})|Q(r_a(\boldsymbol{x}))+o(Q(r_a(\boldsymbol{x}))).
    \end{split}
\]
By the chain rule, we have
\begin{equation}
\begin{split}
   &P_{a \rightarrow v_i^a} = P\left( \mathop{\cup}_{\boldsymbol{c} \in \mathcal{C}_a:c_i=1} S_i(\boldsymbol{c},\boldsymbol{0})<0\right) \\
   &\times P\left(\mathop{\cup}_{\boldsymbol{c} \in \mathcal{C}_a:c_i=1} S_i(\boldsymbol{c},\boldsymbol{x}^{(1)})<0\bigg| \mathop{\cup}_{\boldsymbol{c} \in \mathcal{C}_a:c_i=1} S_i(\boldsymbol{c},\boldsymbol{0})<0\right) \times \cdots
\end{split}
\label{eq_8_awgn}
\end{equation}
where $\boldsymbol{0}$, $\boldsymbol{x}^{(1)}$, $\cdots,$ are codewords in $\mathcal{C}_a$ whose $i$-th components are equal to zero. on the one hand, Since \[P\left(\mathop{\cup}_{\boldsymbol{c} \in \mathcal{C}_a:c_i=1} S_i(\boldsymbol{c},\boldsymbol{x}^{(1)})<0\bigg| \mathop{\cup}_{\boldsymbol{c} \in \mathcal{C}_a:c_i=1} S_i(\boldsymbol{c},\boldsymbol{0})<0\right) < 1,\] it follows that
\begin{equation}
\begin{split}
     &P_{a \rightarrow v_i^a} \le P\left( \mathop{\cup}_{\boldsymbol{c} \in \mathcal{C}_a:c_i=1} S_i(\boldsymbol{c},\boldsymbol{0})<0\right) \\
     &= |R_a(\boldsymbol{0})|Q(r_a(\boldsymbol{0}))+o(Q(r_a(\boldsymbol{0}))).
\end{split}  
\end{equation}
on the other hand, from \eqref{eq_9_awgn}, we obtain the lower bound
\[\begin{split}
    &P_{a \rightarrow v_i^a} \ge P\left( \mathop{\cup}_{\boldsymbol{c} \in \mathcal{C}_a:c_i=1} S_i(\boldsymbol{c},\boldsymbol{0})<0\right)  \\
&\times \bigg( 1-Q\big(\frac{\sum_{j\neq i}x_j^{(1)}u_j^a}{\sqrt{\sum_{j \neq i}|x_j^{(1)}|2u_j^a}}\big)\bigg)\times \cdots \\
&\ge |R_a(\boldsymbol{0})|Q(r_a(\boldsymbol{0}))+o(Q(r_a(\boldsymbol{0}))).
\end{split}\] 
Combining the above results, we obtain
\[
P_{a \rightarrow v_i^a} = |R_a(\boldsymbol{0})|Q(r_a(\boldsymbol{0}))+o(Q(r_a(\boldsymbol{0}))).
\]

\begin{figure*}
\begin{small}
\begin{equation}
\begin{split}
    &  P\left(\mathop{\cup}_{\boldsymbol{c} \in \mathcal{C}_a:c_i=1} S_i(\boldsymbol{c},\boldsymbol{x}^{(1)})<0\bigg| \mathop{\cup}_{\boldsymbol{c} \in \mathcal{C}_a:c_i=1} S_i(\boldsymbol{c},\boldsymbol{0})<0\right) \\
    &\ge P\left(\sum_{j\neq i}x_j^{(1)}L_{v_j^a \rightarrow a}>0\right)P\left(\mathop{\cup}_{\boldsymbol{c} \in \mathcal{C}_a:c_i=1} S_i(\boldsymbol{c},\boldsymbol{x}^{(1)})<0\bigg| \mathop{\cup}_{\boldsymbol{c} \in \mathcal{C}_a:c_i=1} S_i(\boldsymbol{c},\boldsymbol{0})<0,\sum_{j\neq i}x_j^{(1)}L_{v_j^a \rightarrow a}>0\right)\\
    & +  P\left(\sum_{j\neq i}x_j^{(1)}L_{v_j^a \rightarrow a}\le 0\right)P\left(\mathop{\cup}_{\boldsymbol{c} \in \mathcal{C}_a:c_i=1} S_i(\boldsymbol{c},\boldsymbol{x}^{(1)})<0\bigg| \mathop{\cup}_{\boldsymbol{c} \in \mathcal{C}_a:c_i=1} S_i(\boldsymbol{c},\boldsymbol{0})<0,\sum_{j\neq i}x_j^{(1)}L_{v_j^a \rightarrow a}\le0\right) \\
    &\ge P\left(\sum_{j\neq i}x_j^{(1)}L_{v_j^a \rightarrow a}>0\right)P\left(\mathop{\cup}_{\boldsymbol{c} \in \mathcal{C}_a:c_i=1} S_i(\boldsymbol{c},\boldsymbol{x}^{(1)})<0\bigg| \mathop{\cup}_{\boldsymbol{c} \in \mathcal{C}_a:c_i=1} S_i(\boldsymbol{c},\boldsymbol{0})<0,\sum_{j\neq i}x_j^{(1)}L_{v_j^a \rightarrow a}>0\right)\\
    &=P\left(\sum_{j\neq i}x_j^{(1)}L_{v_j^a\rightarrow a}>0\right)=1-Q\bigg(\frac{\sum_{j\neq i}x_j^{(1)}u_j^a}{\sqrt{\sum_{j \neq i}|x_j^{(1)}|2u_j^a}}\bigg)
\end{split} 
\label{eq_9_awgn}
\end{equation}
\end{small}
\end{figure*}
\end{proof}

Next, we consider the impact of the update order between two constraint nodes that are adjacent in the scheduling sequence on the decoding performance. For clarity and notational simplicity, we focus on the first decoding iteration, in which the messages from all variable nodes are assumed to be independent and identically distributed Gaussian random variables following $\mathcal{N}(u, 2u)$.

\begin{theorem}
\label{theo1}
For GLDPC codes that satisfy Assumption \ref{assump_1_awgn}, consider the following two conditions: 

\textbf{Condition 1:} $d_{\min}^a< d_{\min}^b$;

\textbf{Condition 2:} $d_{\min}^a= d_{\min}^b$ and  $\sum_{i={n_{ab}+1}}^{n_a}(A_i^a-B_i^a) > \sum_{k={n_{ab}+1}}^{n_b}(A_k^b-B_k^b)$;

\noindent where \( A_i^a \) denotes the number of codewords in $\mathcal{C}_a$ with weight $d_{\min}^a$ and $v_i^a =1$. The quantity \( B_i^a \) denotes the number of such codewords whose support is contained in $\{n_{ab}+1,\cdots,n_a\}\cup \{i\}$. 

If either of the above conditions holds, updating constraint node $b$ before $a$ results in a lower average error probability across the variable nodes, compared to updating $a$ before $b$.
\end{theorem}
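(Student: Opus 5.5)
We compare the two schedules, $(a,b)$ and $(b,a)$, within the first iteration. Under either order only the variable nodes adjacent to $a$ or $b$ change, so we track the error probabilities of the C2V messages leaving the two nodes. All incoming V2C messages start i.i.d.\ $\mathcal{N}(u,2u)$.

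\textbf{Schedule $(a,b)$.} Node $a$ is updated first and sees only channel-level messages. By Lemma \ref{lemma1_awgn} with $u_j^a=u$ and $\boldsymbol{z}=\boldsymbol{0}$, the minimizing codewords in $r_a(\boldsymbol{0})$ are those of weight $d_{\min}^a$ containing $i$. This gives
\[
P_{a\to v_i^a}=A_i^a\,Q\Big(\sqrt{(d_{\min}^a-1)u/2}\Big)+o(\cdot).
\]
Node $b$ is updated next. Each shared variable $v_k^b$ with $k\le n_{ab}$ now sends a message with mean $u+u_{a\to k}$, where $u_{a\to k}>0$ is the mean of the Gaussian approximating the $a$-to-$v_k$ message. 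The private variables $k>n_{ab}$ still send mean $u$.

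We apply Lemma \ref{lemma1_awgn} again with these heterogeneous means. Take a minimum-weight codeword $\boldsymbol{c}$ of $\mathcal{C}_b$ with $c_k=1$. If its support outside $k$ lies entirely in the private positions $\{n_{ab}+1,\dots,n_b\}$, its exponent is unchanged. This is exactly the set counted by $B_k^b$. If its support touches a shared position, its argument in $Q$ strictly increases, so its contribution becomes $o(Q(\sqrt{(d_{\min}^b-1)u/2}))$ relative to the unimproved terms. Hence, to leading order,
\begin{itemize}
\item for a private $v_k^b$, the update of $a$ removes $A_k^b-B_k^b$ of the $A_k^b$ dominant terms;
\item for a shared $v_k^b$, every term improves.
\end{itemize}
The symmetric statement holds for $(b,a)$, with the roles of $a$ and $b$ exchanged.

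\textbf{Comparing the averages.} We form the difference
\[
\Delta=\sum_v P^{(a,b)}_v-\sum_v P^{(b,a)}_v.
\]
In each schedule, the node updated first contributes its unimproved probability at every neighbour. The node updated second contributes reduced probabilities. At shared variables, each C2V message is added in the LLR domain, so the final error probability there is the same to leading order under both orders. The difference therefore concentrates on the private variables.
\begin{itemize}
\item Under $(a,b)$, the private neighbours of $b$ receive the reduction $(A_k^b-B_k^b)Q_b$ while those of $a$ receive none.
\item Under $(b,a)$, the private neighbours of $a$ gain $(A_i^a-B_i^a)Q_a$ while those of $b$ gain nothing.
\end{itemize}
Here $Q_x=Q(\sqrt{(d^x_{\min}-1)u/2})$. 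This gives
\[
\Delta\approx \sum_{i>n_{ab}}(A_i^a-B_i^a)Q_a-\sum_{k>n_{ab}}(A_k^b-B_k^b)Q_b .
\]

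\textbf{The two conditions.} Under Condition 1 we have $Q_a\gg Q_b$, since $Q_b=o(Q_a)$ because the error probabilities tend to zero. The first term dominates, provided some $A_i^a-B_i^a>0$. Otherwise we must refine the argument by comparing the unimproved first-node terms at shared nodes, where $a$'s larger $Q_a$ again dominates. Either way $\Delta>0$. Under Condition 2 we have $Q_a=Q_b$, and $\Delta>0$ is exactly the stated inequality. In both cases schedule $(b,a)$ has lower average error, which is the claim.

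\textbf{Main obstacle.} The hard step is justifying that the contributions at shared variables cancel to leading order, and that codewords touching shared positions really become lower order. The latter needs $u_{a\to k}$ to be of order comparable to $u$, so that the shift in the $Q$ argument grows as $u\to\infty$. I would first try to establish this using the Gaussian-consistency mean, $u_{a\to k}\approx 4\cdot(-\ln P_{a\to v_k})$, from Lemma \ref{lemma1_awgn}. That mean grows linearly in $u$, so $Q(r+\delta)/Q(r)\to0$.
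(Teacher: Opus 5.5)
Your route is the paper's: Lemma~\ref{lemma1_awgn} for the first node with equal means, then for the second node only the minimum-weight codewords counted by $B$ keep the leading exponent, then a sum over the non-shared neighbours, compared using $Q_b=o(Q_a)$ or $Q_a=Q_b$. Your final $\Delta$ equals the paper's $P_{\text{sum}}^{ab}-P_{\text{sum}}^{ba}$ up to the common factor $Q(\sqrt{u/2})$. The step that fails is how you dispose of the shared variables.

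\emph{Shared $v_k^b$ are not fully improved.} Your claim that ``for a shared $v_k^b$, every term improves'' is false. $B_k^b$ is defined with support in $\{n_{ab}+1,\dots,n_b\}\cup\{k\}$ precisely so that it also covers shared $k$. The extrinsic message to $v_k^b$ excludes $v_k^b$'s own improved input, so those codewords keep the old exponent. The paper has $P_{b\to v_k^b}=B_k^bQ(\cdot)+o(\cdot)$ for every $k$, shared or not.

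\emph{Shared terms do not cancel.} Shared-variable errors are not ``the same to leading order under both orders.'' Under $(a,b)$ a shared $v_k$ receives C2V messages with error $\approx A_k^aQ_a$ and $B_k^bQ_b$. Under $(b,a)$ it receives $B_k^aQ_a$ and $A_k^bQ_b$. Adding in the LLR domain creates no symmetry between these. Suppose you tally C2V error probabilities, as ``the node updated first contributes its unimproved probability at every neighbour'' suggests. Then the shared terms are of the same order as the private ones and leave a residual $\sum_{k\le n_{ab}}[(A_k^a-B_k^a)Q_a-(A_k^b-B_k^b)Q_b]$. The sums in Condition~2 would then run over all positions, not just the private ones.

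\emph{The missing idea.} The paper models the error at a variable node as the product of the error probabilities of its incoming messages. A non-shared neighbour of $a$ therefore contributes $Q(\sqrt{u/2})P_{a\to v}$, and a non-shared neighbour of $b$ contributes $Q(\sqrt{u/2})P_{b\to v}$. A shared one contributes $Q(\sqrt{u/2})P_{a\to v}P_{b\to v}=\Theta(Q(\sqrt{u/2})Q_aQ_b)$. These shared terms carry different coefficients under the two orders ($A_k^aB_k^b$ vs.\ $B_k^aA_k^b$), so they do not cancel. They drop out because they are $o(Q(\sqrt{u/2})\,Q(\sqrt{\min\{d_{\min}^a,d_{\min}^b\}u/2}))$, i.e.\ higher order. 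With this model your $\Delta$ is justified.

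\emph{Condition~1 fallback.} The same observation undercuts it. When $\sum_{i>n_{ab}}(A_i^a-B_i^a)=0$, the leading terms tie; for example, with $n_{ab}=0$ the two orders produce identical messages. The shared terms you appeal to are of order $Q_aQ_b$ under both orders, with coefficients that $d_{\min}^a<d_{\min}^b$ does not control. So ``either way $\Delta>0$'' is unsupported. The paper only concludes $P_{\text{sum}}^{ab}\ge P_{\text{sum}}^{ba}$ at leading order, and that is what the argument yields.

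\emph{Points that are fine.} Your exponent $\sqrt{(d_{\min}-1)u/2}$ is what Lemma~\ref{lemma1_awgn} gives with $\boldsymbol{z}=\boldsymbol{0}$; the paper writes $\sqrt{d_{\min}u/2}$, and the common shift does not affect the comparison. Your consistency argument that $u_{a\to k}$ grows linearly in $u$ properly justifies what the paper simply asserts ($u_j^b>2u$).
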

\begin{proof}
We first analyze the decoding performance when the constraint node $a$ is updated before $b$. For the constraint node $a$, according to Lemma \ref{lemma1_awgn}, the error probability of the message passed from $a$ to $v_i^a$ is given by:
\begin{equation}
  \begin{split}
       & P_{a\rightarrow v_{i}^a}=|R_a(\boldsymbol{0})|Q(r_a(\boldsymbol{0}))+o(Q(r_a(\boldsymbol{0})))\\&=Q\Big(\sqrt{d_{\min}^au/2}\Big)A_i^a +o\Big(Q\Big(\sqrt{d_{\min}^au/2}\Big)\Big).
  \end{split}
\end{equation}
where
\[
\begin{split}
    r_a(\boldsymbol{0}) &= \min_{\boldsymbol{c}\in \mathcal{C}_a:c_i=1} \frac{\sum_{j \neq i}c_ju}{\sqrt{\sum_{j \neq i}c_j2u}} \\
    &=\min_{\boldsymbol{c}\in \mathcal{C}_a:c_i=1} \sqrt{\frac{\sum_{j \neq i}c_ju}{2}} \\
    &=\min_{\boldsymbol{c}\in \mathcal{C}_a:c_i=1} \sqrt{\frac{w(\boldsymbol{c})u}{2}}
\end{split}
\]
and $w(\boldsymbol{c})$ denotes the Hamming weight of $\boldsymbol{c}$.

Then, after updating constraint node $a$, the messages \( L_{v_j^b \rightarrow b} \) follow Gaussian distributions \( \mathcal{N}(u_j^b>2u, 2u_j^b) \) for all \( j \in \{1, \ldots, n_{ab}\} \). Similarly, according to Lemma \ref{lemma1_awgn}, the error probability of the message passed from $b$ to $v_k^b$ is given by:
\begin{equation}
    P_{b\rightarrow v_{k}^b}=Q\Big(\sqrt{d_{\min}^bu/2}\Big)B_k^b +o\Big(Q\Big(\sqrt{d_{\min}^bu/2}\Big)\Big),
\end{equation}
Therefore, after sequentially updating constraint nodes $a$ and $b$, the sum of the error probabilities over all variable nodes connected to $a$ or $b$ is given by:

\begin{equation}
\begin{split}
    &P_{\text{sum}}^{ab} = \sum_{j=1}^{n_{ab}}Q\Big(\sqrt{u/2}\Big)P_{a\rightarrow v_{j}^a}P_{b\rightarrow v_{j}^b} \\
    &+\sum_{i=n_{ab}+1}^{n_a} Q\Big(\sqrt{u/2}\Big)P_{a\rightarrow v_{i}^a}+\sum_{k={n_{ab}+1}}^{n_b}Q\Big(\sqrt{u/2}\Big)P_{b\rightarrow v_{k}^b} \\
    &=Q\Big(\sqrt{u/2}\Big)Q\Big(\sqrt{d_{\min}^a u/2}\Big)\sum_{i={n_{ab}+1}}^{n_a}A_i^a\\
&+Q\Big(\sqrt{u/2}\Big)Q\Big(\sqrt{d_{\min}^b u/2}\Big)\sum_{k={n_{ab}+1}}^{n_b}B_k^b \\
&+o\bigg(Q\Big(\sqrt{u/2}\Big)Q\Big(\sqrt{\min \{d_{\min}^a,d_{\min}^b\} u/2}\Big)\bigg)
\end{split}
\end{equation}
Similarly, the sum of the error probabilities after sequentially updating constraint nodes \(b\) and \(a\), denoted by \(P_{\text{sum}}^{ba}\), can be obtained:

\begin{equation}
\begin{split}
    &P_{\text{sum}}^{ba}=Q\Big(\sqrt{u/2}\Big)Q\Big(\sqrt{d_{\min}^a u/2}\Big)\sum_{i={n_{ab}+1}}^{n_a}B_i^a\\
&+Q\Big(\sqrt{u/2}\Big)Q\Big(\sqrt{d_{\min}^b u/2}\Big)\sum_{k={n_{ab}+1}}^{n_b}A_k^b \\
&+o\bigg(Q\Big(\sqrt{u/2}\Big)Q\Big(\sqrt{\min \{d_{\min}^a,d_{\min}^b\} u/2}\Big)\bigg)
\end{split}
\end{equation}
When condition 1 or 2 holds, we have \( P_{\text{sum}}^{ab} \ge P_{\text{sum}}^{ba} \), which completes the proof of the theorem.
\end{proof}


Theorem 2 implies that, for any pair of adjacent subcodes, the one with better distance properties should be updated first. For non-adjacent constraint nodes, this ordering can be achieved through successive exchanges of adjacent nodes.


\begin{remark}
Under the GLDPC code ensemble, the comparison of
\(\sum_{i=n_{ab}+1}^{n_a} (A_i^a - B_i^a)\) in Condition 2 can be interpreted as comparing the quantity \(
f(A_{\min}^a,d_{\min},n_a,n_{ab}) = (n_a - n_{ab}) \frac{A_{\min}^a}{{n_a \choose d_{\min}^a}}\big({n_a -1 \choose d_{\min}^a-1}-{n_a-n_{ab}-1 \choose d_{\min}^a-1}\big)\), where \(A_{\min}^a\) denotes the number of minimum-weight codewords in \(\mathcal{C}_a\). Here, \(\frac{A_{\min}^a}{{n_a \choose d_{\min}^a}}\) can be viewed as the probability that a randomly chosen weight-\(d\) binary vector is a codeword in \(\mathcal{C}_a\). Moreover, \(
\frac{A_{\min}^a}{{n_a \choose d_{\min}^a}} {n_a - 1 \choose d_{\min}^a - 1}
\) represents the average number of minimum-weight codewords whose $i$-th bit is equal to 1.
\end{remark}
\begin{remark}
By viewing an LDPC code as a GLDPC code in which each constraint node corresponds to an SPC code, prioritizing lower-degree constraint nodes in conventional LDPC decoding \cite{frenzel2019static} follows as a corollary of Theorem 2.
\end{remark}

\section{Scheduling Characteristics under BEC}
\subsection{Flooding Message-Passing Decoder under BEC}
Assume that the codeword $\boldsymbol{c}=\left\{c_1,\cdots,c_N\right\}\in \left\{0,1\right\}^N$ is transmitted through a BEC, and $\boldsymbol{y}=\left\{y_1,\cdots,y_N\right\}\in \left\{0,1,?\right\}^N$ is the received sequence, where $?$ indicates an erasure.

In LDPC codes, the flooding message-passing algorithm proceeds through iterative message exchanges between variable nodes and check nodes, referred to as V2C and C2V message updates, respectively. The update rule for the V2C messages is

\begin{equation}
m_{\alpha \rightarrow \eta}^{(l)} =
\begin{cases}
?, &
  \text{if } \forall h \in \mathcal{N}(\alpha)\setminus \{\eta\},m_{h \rightarrow \alpha}^{(l-1)} = ?,\text{ and } y_\alpha = ?, \\[2ex]
\beta, &
  \text{if } \exists \, h \in \mathcal{N}(\alpha)\setminus \{\eta\} 
  \text{ such that } m_{h \rightarrow \alpha}^{(l-1)} = \beta,\\
  &\text{or } y_\alpha = \beta \text{ with } \beta \in \{0,1\}.
\end{cases}
\label{V2C message update}
\end{equation}
The message update rule of the C2V message is 
\begin{equation}
		m_{\eta \rightarrow \alpha}^{(l)}= \begin{cases}
		    ? ,& \text{if }\exists \, g \in \mathcal{N}(\eta) \backslash \alpha,\\
            &\text{such that }m_{g \rightarrow \eta}^{(l)}=?   \\
            \oplus_{g \in \mathcal{N}(\eta) \backslash \alpha}m_{g \rightarrow \eta}^{(l)},& \text{otherwise.}
		\end{cases}
  \label{C2V message update}
\end{equation}
where $l$ is the number of iterations, $\mathcal{N}(v)$ represents the nodes connected directly to node $v$, $\alpha \rightarrow \eta$ means from variable node $\alpha$ to constraint node $\eta$ and $\eta \rightarrow \alpha$ means from constraint node $\eta$ to variable node $\alpha$, $\oplus$ denotes modulo-2 addition. The initial message are given by $m_{\alpha \rightarrow \eta}^{(0)}=y_\alpha$ and $m_{\eta \rightarrow \alpha }^{(0)}=?$.
 
GLDPC codes are also decoded using a message-passing algorithm consisting of V2C and C2V message updates. The V2C update rule is identical to that of LDPC codes and can be computed according to (\ref{V2C message update}). However, the constraint nodes in GLDPC codes include both SPC nodes and GC nodes, and the C2V message update rules depend on the type of constraint node. For SPC nodes, the outgoing messages are determined based on the received messages and the corresponding SPC code constraints, and the C2V update rule is the same as that in (\ref{C2V message update}). In contrast, for the constraint nodes, the transmission of messages is managed by their corresponding subcodes and decoded using the APP decoder. The decoding rules of the APP decoder are as follows \cite{richardson2008modern}: For a constraint node $C$ with subcode $\mathcal{C}$, let $v_i$ denote its $i$th neighboring variable node, $\boldsymbol{m}_{\sim i}$ denote the message received from its neighboring variable nodes other than $v_i$, and $\boldsymbol{x}$ denote the codewords of $\mathcal{C}$. Define
$$
\text{Er}(\boldsymbol{m}_{\sim i}) \triangleq \{ j \mid j\neq i \text{, and }m_j \ \text{is erased} \},
$$
$$
\text{Ne}(\boldsymbol{m}_{\sim i}) \triangleq \{ j \mid j\neq i \text{, and }m_j \ \text{is not erased} \},
$$
$$
\chi(\boldsymbol{m_{\sim i}})\triangleq\{\boldsymbol{x} \in \mathcal{C}:\boldsymbol{x}_{\text{Ne}(\boldsymbol{m}_{\sim i})}=\boldsymbol{m}_{\text{Ne}(\boldsymbol{m}_{\sim i})}\},
$$
where $\text{Er}(\boldsymbol{m}_{\sim i})$ denotes the set of indices (excluding $i$) at which the components of $\boldsymbol{m}_{\sim i}$ are erased, while $\text{Ne}(\boldsymbol{m}_{\sim i})$ denotes the complementary set of indices corresponding to non-erased positions.

Then, the message to $v_i$ from $C$ under the APP decoder is 
\begin{equation}
 m_{C \rightarrow v_i} = \begin{cases}
     \beta,&\text{if } \forall \boldsymbol{x} \in \chi(\boldsymbol{m}_{\sim i}),x_i=\beta\\
     ?, &\text{otherwise.}
 \end{cases}
\label{APP}
\end{equation}

Note that an SPC node and its neighboring variable nodes form a subcode with only a single parity check, which therefore has a minimum distance of 2. In this case, update rules (\ref{C2V message update}) and (\ref{APP}) coincide.

\subsection{Scheduling Characteristics}

Before investigating the impact of the update order between two constraint nodes that are adjacent in the scheduling sequence on the decoding performance, it is necessary to first analyze the effect of updating a single constraint node. To simplify the analysis, the subsequent discussion will be conducted under the conditions of Assumption \ref{assump_3}.

\begin{assumption}
In the first iteration, assume that the messages passed from different variable nodes to nodes $a$ and $b$ are independent of each other. Let $\varepsilon$ denote a vanishing parameter characterizing the error probability of the incoming messages. Then, $\varepsilon_i^a = \Theta(\varepsilon)$ for all $i$, and $\varepsilon_j^b = \Theta(\varepsilon)$ for all $j$ where $\varepsilon \to 0$.
\label{assump_3}
\end{assumption}

\begin{condition}
Let $\boldsymbol{m}_{\thicksim i}^a$ denote the message transmitted from the neighboring variable nodes of $a$, excluding $v_i^a$ to $a$. We say that message $\boldsymbol{m}_{\thicksim i}^a$ satisfies Condition 1 if there exists a codeword $\boldsymbol{u}^a \in \mathcal{C}_{a}$ with $u_i^a=1$ such that the support set of $\boldsymbol{u}_{\sim i}^a$ satisfies supp$(\boldsymbol{u}_{\sim i}^a) \subseteq$ Er$(\boldsymbol{m}^a_{\sim i}) $.
\end{condition}
\begin{lemma}
\label{lemma_biterror_allp}
For GLDPC codes that satisfy Assumption \ref{assump_3}, the erasure probability of the message passed from $a$ to $v_i^a$, denoted by $P_{a \rightarrow v_i^a}$, is $\Theta(\varepsilon^{d_{\min}^a-1})$.
\end{lemma}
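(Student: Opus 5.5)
The plan is to identify the event $\{m_{a\to v_i^a}=?\}$ exactly with Condition 1, and then bound the probability of Condition 1 from above and below in terms of $\varepsilon$. First I would show the equivalence. By \eqref{APP}, the output is erased iff $\chi(\boldsymbol{m}^a_{\sim i})$ contains codewords $\boldsymbol{x},\boldsymbol{x}'$ with $x_i\neq x_i'$. Under the APP decoder on the BEC, every non-erased incoming message is correct; by symmetry we may take the all-zero codeword, so $\boldsymbol{0}\in\chi(\boldsymbol{m}^a_{\sim i})$. By linearity, the output is erased iff some $\boldsymbol{u}^a\in\mathcal{C}_a$ with $u^a_i=1$ has $\boldsymbol{u}^a_{\text{Ne}}=\boldsymbol{0}$, i.e.\ $\mathrm{supp}(\boldsymbol{u}^a_{\sim i})\subseteq \mathrm{Er}(\boldsymbol{m}^a_{\sim i})$. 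This is exactly Condition 1. Hence
\[
P_{a\to v_i^a}=P\Big(\bigcup_{\boldsymbol{u}\in\mathcal{C}_a:u_i=1}\{\mathrm{supp}(\boldsymbol{u}_{\sim i})\subseteq \mathrm{Er}(\boldsymbol{m}^a_{\sim i})\}\Big).
\]

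For the upper bound, I would apply a union bound over the finitely many codewords with $u_i=1$. By the independence in Assumption \ref{assump_3}, each event has probability $\prod_{j\in \mathrm{supp}(\boldsymbol{u}_{\sim i})}\varepsilon_j^a=\Theta(\varepsilon^{w(\boldsymbol{u})-1})$. Every such $\boldsymbol{u}$ is nonzero, so $w(\boldsymbol{u})\ge d^a_{\min}$. Since $|\mathcal{C}_a|$ is a constant independent of $\varepsilon$, the sum is $O(\varepsilon^{d^a_{\min}-1})$.

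For the lower bound, I need a single codeword $\boldsymbol{u}$ with $u_i=1$ and $w(\boldsymbol{u})=d^a_{\min}$. Its event alone gives $P_{a\to v_i^a}\ge \prod_j\varepsilon^a_j=\Omega(\varepsilon^{d^a_{\min}-1})$, and combining this with the upper bound gives $\Theta(\varepsilon^{d^a_{\min}-1})$.

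I expect this existence step to be the main obstacle. Nothing in the statement guarantees that position $i$ is covered by a minimum-weight codeword. In general the exponent is $w_i-1$, where $w_i=\min\{w(\boldsymbol{u}):\boldsymbol{u}\in\mathcal{C}_a,u_i=1\}$. The cleanest fix is to invoke the setting used throughout the paper, implicit in Theorem \ref{theo1} and its Remark, where $A_i^a\ge1$; for instance, the subcode is transitive, or the random column assignment of Definition \ref{def_ensemble} is averaged over. I would state this explicitly as a standing hypothesis: every coordinate lies in the support of some minimum-weight codeword. If it cannot be assumed, I would instead state the result with $w_i$ in place of $d^a_{\min}$, which still yields $O(\varepsilon^{d^a_{\min}-1})$.
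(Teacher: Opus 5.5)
Your proposal is correct and takes essentially the same route as the paper. You identify the erasure event with Condition~1 under the all-zero codeword, get the $O(\varepsilon^{d_{\min}^a-1})$ bound because any codeword with $u_i=1$ has at least $d_{\min}^a-1$ further support positions (your union bound over codewords is equivalent to the paper's count over erasure patterns), and get the matching lower bound from a single minimum-weight codeword covering position $i$. The existence gap you flag is genuine: the paper simply assumes such a codeword exists and only notes that $g(i,\mathcal{C}_a)>0$ ``under the GLDPC code ensemble,'' so making that hypothesis explicit, or averaging over the random column assignment, is the right repair.
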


\begin{proof}
Assuming that the all-zero codeword is transmitted. According to \eqref{APP}, the decoding result \( m_{a \rightarrow v_i^a} = ? \) if and only if \( \boldsymbol{m}_{\thicksim i}^a \) satisfies Condition 1. Based on Assumption 1, we obtain that
 \begin{equation}
     \begin{split}
         P(\boldsymbol{m}_{\thicksim i}^a|\boldsymbol{0}^{n_a-1})=\prod_{k \neq i}P(m_{v_k \rightarrow a}|0)=\prod_{k \neq i} (\varepsilon_k^a)^{e_k}(1-\varepsilon_k^a)^{1-e_k}
     \end{split}
     \label{e1}
 \end{equation}
where $e_k=1$ if $m_{v_k \rightarrow a}=?$, and $e_k = 0$ otherwise.

First, we establish that $P_{a \rightarrow v_i^a} = O(\varepsilon^{d_{\min}^a-1})$ by proving that there does not exist any message $\boldsymbol{m}_{\sim i}^a$ with fewer than $d_{\min}^a-1$ erasures such that $m_{a \rightarrow v_i^a} = ?$. This is because, for any message \( \boldsymbol{m}_{\sim i}^a \) with fewer than \( d_{\min}^a - 1 \) erasures, and for any nonzero codeword \( \boldsymbol{x}^a \in \mathcal{C}_a \), the cardinality of the support set satisfies \( |\mathrm{supp}(\boldsymbol{x}_{\sim i}^a)| \ge d_{\min}^a - 1 \). Meanwhile, since \( |\mathrm{Er}(\boldsymbol{m}_{\sim i}^a)| < d_{\min}^a -1\), it follows that \( \mathrm{supp}(\boldsymbol{x}_{\sim i}^a) \not\subseteq \mathrm{Er}(\boldsymbol{m}_{\sim i}^a) \). Therefore, the message \( m_{a \rightarrow v_i^a} \) must be equal to 0.

Finally, we show that $P_{a \rightarrow v_i^a} = \Theta(\varepsilon^{d_{\min}^a-1})$. For any codeword \( \boldsymbol{x}^a \in \mathcal{C}_a \) with weight \( d_{\min}^a \) and satisfying \( x_i^a = 1 \), let  \( \boldsymbol{m}_{\sim i}^a \) satisfy \( \mathrm{Er}(\boldsymbol{m}_{\sim i}^a) = \mathrm{supp}(\boldsymbol{u}_{\sim i}^a) \). Then \( \boldsymbol{m}_{\sim i}^a \) satisfies Condition 1, and $P(\boldsymbol{m}_{\thicksim i}^a|\boldsymbol{0}^{n_a-1}) = \Theta\left(\varepsilon^{d_{\min}^a - 1}\right).$ It thus follows that $P_{a \rightarrow v_i^a} = \Theta(\varepsilon^{d_{\min}^a-1})$.

Consequently, there exists a constant $f(i,\mathcal{C}_a)$ such that $P_{a \rightarrow v_i^a}$ can be expressed as

\begin{equation}
 P_{a \rightarrow v_i^a} = g(i,\mathcal{C}_a)\, \varepsilon^{d_{\min}^a-1} + o(\varepsilon^{d_{\min}^a-1}).
 \label{e6}
\end{equation}
and $g(i,\mathcal{C}_a)>0$ under the GLDPC code ensemble.
\end{proof}

Next, we consider the impact of the update order between two constraint nodes that are adjacent in the scheduling sequence on the decoding performance. 

\begin{theorem}
\label{theo1}
For GLDPC codes that satisfy Assumption \ref{assump_3}, if $d_{\min}^a< d_{\min}^b$, then updating constraint node $b$ before $a$ results in a lower average erasure probability across the variable nodes, compared to updating $a$ before $b$.
\end{theorem}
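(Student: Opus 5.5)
We mirror the BI-AWGN argument of the previous section, replacing Gaussian tail estimates with the order-of-magnitude bound of Lemma \ref{lemma_biterror_allp}. The plan is to compute, to leading order in $\varepsilon$, the sum of erasure probabilities over all variable nodes adjacent to $a$ or $b$ under the two schedules. We then show that the schedule ``$b$ before $a$'' has a strictly smaller leading term.

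\textbf{Schedule $a$ then $b$.} By Lemma \ref{lemma_biterror_allp}, each $v_i^a$ receives from $a$ a message that is erased with probability $g(i,\mathcal{C}_a)\varepsilon^{d_{\min}^a-1}+o(\varepsilon^{d_{\min}^a-1})$. The erasure probability of $v_i^a$ after this update is then $\Theta(\varepsilon)\cdot g(i,\mathcal{C}_a)\varepsilon^{d_{\min}^a-1}(1+o(1))=\Theta(\varepsilon^{d_{\min}^a})$, using independence from Assumption \ref{assump_3} (channel erasure times C2V erasure).

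\emph{Shared nodes.} For $j\le n_{ab}$, the V2C message to $b$ now has erasure probability $\Theta(\varepsilon^{d_{\min}^a})=o(\varepsilon)$. The private neighbours of $b$ still carry $\Theta(\varepsilon)$.

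\emph{Private neighbours of $b$.} We reapply the counting in the proof of Lemma \ref{lemma_biterror_allp}. The message from $b$ to $v_k^b$, for $k>n_{ab}$, is erased only if the support of some codeword with $u_k=1$ (minus $k$) is fully erased. This probability is at most $O(\varepsilon^{d_{\min}^b-1})$. After multiplying by the channel erasure, $v_k^b$ ends at $O(\varepsilon^{d_{\min}^b})$.

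\emph{Total.} The private neighbours of $a$ contribute $\sum_{i>n_{ab}} \Theta(\varepsilon)g(i,\mathcal{C}_a)\varepsilon^{d_{\min}^a-1}=\Theta(\varepsilon^{d_{\min}^a})$. Since $n_a>n_{ab}$ (the nodes are distinct), this sum is nonempty, and the constants $g>0$ hold under the ensemble. The shared nodes are even smaller, being products of two such terms. Hence
\[
P^{ab}_{\text{sum}}=\Theta(\varepsilon^{d_{\min}^a}).
\]

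\textbf{Schedule $b$ then $a$.} By the same computation, the shared nodes enter $a$ with erasure probability $O(\varepsilon^{d_{\min}^b})=o(\varepsilon)$. The private neighbours of $b$ end at $\Theta(\varepsilon^{d_{\min}^b})$. For $i>n_{ab}$, the message from $a$ to $v_i^a$ is erased only through a codeword $u\in\mathcal{C}_a$ with $u_i=1$ whose remaining support is fully erased. If that support avoids $\{1,\dots,n_{ab}\}$, the cost is $\varepsilon^{d_{\min}^a-1}$ weighted by the $B$-type count. Otherwise each shared position costs at least $\varepsilon^{d_{\min}^b}\ll\varepsilon$, so such codewords contribute at higher order.

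This yields $P^{ba}_{\text{sum}}\le c_1\varepsilon^{d_{\min}^a}+O(\varepsilon^{d_{\min}^b})+o(\varepsilon^{d_{\min}^a})$. Here $c_1$ counts only minimum-weight codewords supported in $\{n_{ab}+1,\dots,n_a\}\cup\{i\}$, and $c_1\le$ the corresponding constant $\sum_i g(i,\mathcal{C}_a)$ appearing in $P^{ab}_{\text{sum}}$. Since $d_{\min}^b\ge d_{\min}^a+1$, the $b$-terms are $o(\varepsilon^{d_{\min}^a})$.

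\textbf{Comparison and main obstacle.} The step we expect to be the main obstacle is a \emph{strict} comparison. We need $c_1<\sum_{i>n_{ab}}g(i,\mathcal{C}_a)$, or at least that the difference is of exact order $\varepsilon^{d_{\min}^a}$. This requires some minimum-weight codeword of $\mathcal{C}_a$ through a private position that also touches a shared position. We would argue this holds under the GLDPC ensemble (random column assignment, as in the Remark following the AWGN theorem), since with positive probability such a codeword exists.

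If this fails, so that $c_1$ equals the full constant, we fall back on the weaker conclusion $P^{ba}_{\text{sum}}\le P^{ab}_{\text{sum}}$. This matches the ``$\ge$'' form used in the AWGN theorem. In either case, dividing by the number of variable nodes gives the stated comparison of average erasure probabilities.
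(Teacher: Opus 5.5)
Your proposal is correct and is essentially the paper's proof. For $a$ before $b$ the leading term is $\sum_{i>n_{ab}} g(i,\mathcal{C}_a)\,\varepsilon^{d_{\min}^a}$; for $b$ before $a$ the shared inputs to $a$ are $O(\varepsilon^{d_{\min}^b})$, so only minimum-weight codewords supported in the private positions survive (your $c_1$, the paper's $h(i,n_{ab},\mathcal{C}_a)\le g(i,\mathcal{C}_a)$), the $b$-side terms are $o(\varepsilon^{d_{\min}^a})$, and strictness rests on the same ensemble claim as in the paper (equality only when $n_{ab}=0$).

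The only slip is your justification of $n_a>n_{ab}$: it does not follow from $a\neq b$, because $a$'s neighbourhood may lie inside $b$'s. In that case the $\varepsilon^{d_{\min}^a}$ terms vanish in both orders and the comparison is decided at order $\varepsilon^{d_{\min}^b}$, where neither your argument nor the paper's says anything. So $n_a>n_{ab}$ should be stated as an explicit hypothesis; the paper leaves it implicit too.
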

\begin{proof}
Assuming that the all-zero codeword is transmitted. We first analyze the decoding performance when the constraint node $a$ is updated before $b$. For the constraint node $a$ associated with a subcode $\mathcal{C}_a$, according to Lemma \ref{lemma_biterror_allp}, the erasure probability of the message passed from $a$ to $v_i^a$ is given by:
\begin{equation}
P_{a \rightarrow v_i^a}= g(i,\mathcal{C}_a)\, \varepsilon^{d_{\min}^a-1} + o(\varepsilon^{d_{\min}^a}-1),
\end{equation}

Then, after updating constraint node $a$, let $\boldsymbol{m}^b$ denote the message transmitted from the neighboring variable nodes of $b$ to $b$. The erasure probability of messages received by $b$ from variable nodes $v_k^b$ with $k \in \{1, \ldots, n_{ab}\}$ is $\Theta(\varepsilon^{d_{\min}^a})$, while the erasure probability of messages from variable nodes $v_k^b$ with $k \in \{n_{ab}+1, \ldots, n_b\}$ is $\Theta(\varepsilon)$. Similar to Lemma \ref{lemma_biterror_allp}, we obtain that $
P(\boldsymbol{m}_{\sim j}^b \mid \boldsymbol{0}^{n_b-1}) = O\!\left(\varepsilon^{d_{\min}^b-1}\right),$ and it is $
\Theta\left(\varepsilon^{d_{\min}^b-1}\right)$ if and only if exactly $d_{\min}^b-1$ messages $m_{v_k^b \rightarrow b}=?$ for $k \neq j$, all of which satisfy $k \in \{n_{ab}+1,\ldots,n_b\}$, and $\boldsymbol{m}_{\sim j}^b$ satisfies Condition 1. Therefore, there exists a constant $h(j,n_{ab},\mathcal{C}_b)$ such that the erasure probability of the message passed from $b$ to $v_j^b$ is 
given by:

\begin{equation}
 P_{b \rightarrow v_j^b}= h(j,n_{ab},\mathcal{C}_b)\varepsilon^{d_{\min}^b-1}+o(\varepsilon^{d_{\min}^b-1}).
 \label{eq_1}
\end{equation}
and $h(j,n_{ab},\mathcal{C}_b) > 0$ under the GLDPC code ensemble assumption and the condition $n_b - n_{ab} > d_{\min}^b$, and $g(j,\mathcal{C}_b) \geq h(j,n_{ab},\mathcal{C}_b)$, with equality holding under the GLDPC code ensemble assumption if and only if $n_{ab} = 0$.

Therefore, after sequentially updating constraint nodes $a$ and $b$, the sum of the erasure probabilities over all variable nodes connected to $a$ or $b$ is given by:

\begin{equation}
\begin{split}
   P_{\text{sum}}^{ab}&=\sum_{i=n_{ab}+1}^{n_a}g(i,\mathcal{C}_a)\varepsilon^{d_{\min}^a}\\
   &+\sum_{j=n_{ab}+1}^{n_b}h(j,n_{ab},\mathcal{C}_b)\varepsilon^{d_{\min}^b}+o(\varepsilon^{d_{\min}^a}).
\end{split}
\label{e10}
\end{equation}

Similarly, the sum of the erasure probabilities after sequentially updating constraint nodes \(b\) and \(a\), denoted by \(P_{\text{sum}}^{ba}\), can be obtained. When \(d_{\min}^a < d_{\min}^b\), we have \(P_{\text{sum}}^{ab} \ge P_{\text{sum}}^{ba}\), which implies that updating \(b\) before \(a\) yields better decoding performance.
\end{proof}

Based on Theorem \ref{theo1}, it can be seen that subcodes with larger minimum distance should be updated first. When subcodes have the same minimum distance, further characterization of the distance properties—such as comparing the number of minimum-weight codewords—can be used to further confirm that subcodes with better distance properties should be prioritized for updating.

\begin{theorem}
\label{theo2}
In the first iteration, let $\varepsilon_i^a = \varepsilon$ for all $i$, and $\varepsilon_j^b = \varepsilon$ for all $j$ where $\varepsilon \to 0$. If any one of the following conditions holds, then updating constraint node $b$ before $a$ results in a lower average error probability across the variable nodes, compared to updating $a$ before $b$:

\noindent $\operatorname{1)} \; d_{\min}^a<d_{\min}^b$;

\noindent $\operatorname{2)}\;d_{\min}^a =d_{\min}^b$, and 
\begin{equation}
    \begin{split}
f(A_{\min}^a,\,d_{\min}^a,n_a,n_{ab})>f(A_{\min}^b,d_{\min}^b,n_b,n_{ab})   
\end{split}
\label{e2}
\end{equation}
where $A_{\min}^a$ denotes the number of codewords in $\mathcal{C}_a$ of weight $d_{\min}^a$, $f(A_{\min}^a,d_{\min}^a,n_a,n_{ab})=(n_a-n_{ab})\frac{A_{\min}^a}{{n_a \choose d_{\min}^a}} \bigg({n_a-1 \choose d_{\min}^a-1}-\\{n_a-n_{ab}-1 \choose d_{\min}^a-1}\bigg)$. When $n_a = n_b$, (\ref{e2}) reduces to $A_{{\min}}^a > A_{{\min}}^b$.
\end{theorem}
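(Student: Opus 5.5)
The plan is to reuse the structure of the proof of the preceding BEC theorem, but now to identify the leading constants $g(i,\mathcal{C})$ and $h(j,n_{ab},\mathcal{C})$ exactly. This is possible because all incoming erasure probabilities equal $\varepsilon$. Case 1) is immediate from the preceding theorem, since Assumption \ref{assump_3} holds with $\varepsilon_i^a=\varepsilon_j^b=\varepsilon$. So the work is in case 2), where $d_{\min}^a=d_{\min}^b=d$ and the two leading orders $\varepsilon^{d}$ of $P_{\text{sum}}^{ab}$ and $P_{\text{sum}}^{ba}$ coincide. We must therefore compare their coefficients.

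\textbf{Step 1: leading coefficient of a single update.} Following the proof of Lemma \ref{lemma_biterror_allp}, $m_{a\to v_i^a}=?$ at order $\varepsilon^{d-1}$ happens exactly when the erased set among the other $n_a-1$ positions is precisely $\mathrm{supp}(\boldsymbol{u}_{\sim i})$ for some weight-$d$ codeword $\boldsymbol{u}$ with $u_i=1$. Erasure sets of size larger than $d-1$ only contribute $o(\varepsilon^{d-1})$. Distinct minimum-weight codewords containing $i$ give distinct $(d-1)$-sets, since two codewords agreeing on those positions would differ in weight less than $d$. Hence $g(i,\mathcal{C}_a)=A_i^a$, the number of weight-$d$ codewords with $i$ in the support.

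\textbf{Step 2: the second update.} After $a$ is updated, the shared variables $v_1,\dots,v_{n_{ab}}$ have erasure probability $\Theta(\varepsilon^{d})$, which is $o(\varepsilon)$. So at order $\varepsilon^{d-1}$, only minimum-weight codewords of $\mathcal{C}_b$ whose support avoids $\{1,\dots,n_{ab}\}$, apart from position $j$, survive. This gives $h(j,n_{ab},\mathcal{C}_b)=B_j^b$. The sum over the variable nodes connected to $a$ or $b$ is then obtained by multiplying by the channel factor $\varepsilon$, exactly as in \eqref{e10}. Shared nodes receive a product of two small terms and are of higher order. This yields
\[
P_{\text{sum}}^{ab}-P_{\text{sum}}^{ba}=\varepsilon^{d}\Big(\sum_{i>n_{ab}}(A_i^a-B_i^a)-\sum_{k>n_{ab}}(A_k^b-B_k^b)\Big)+o(\varepsilon^{d}),
\]
which is the BEC analogue of Condition 2 in the AWGN theorem.

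\textbf{Step 3: ensemble averaging.} This is the step I expect to be the main obstacle, because it replaces the combinatorial sums by the closed form $f$. Under Definition \ref{def_ensemble}, the columns are randomly permuted. So the expectation of $A_i^a$ is $A_{\min}^a\binom{n_a-1}{d-1}/\binom{n_a}{d}$, which is the probability that a fixed position lies in a random $d$-set, times $A_{\min}^a$. Similarly, the expectation of $B_i^a$ is $A_{\min}^a\binom{n_a-n_{ab}-1}{d-1}/\binom{n_a}{d}$: the other $d-1$ support positions must fall among the $n_a-n_{ab}-1$ private positions other than $i$. Summing over the $n_a-n_{ab}$ private positions gives exactly $f(A_{\min}^a,d,n_a,n_{ab})$. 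Linearity of expectation carries this into the ensemble-average error probabilities, so \eqref{e2} gives $P_{\text{sum}}^{ab}>P_{\text{sum}}^{ba}$ for small $\varepsilon$.

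Finally, when $n_a=n_b$, the function $f$ is the same positive multiple of $A_{\min}$ for both nodes, because the binomial factor is positive whenever $n_{ab}\ge1$. Hence \eqref{e2} reduces to $A_{\min}^a>A_{\min}^b$.

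The points needing care are the independence of the post-update messages from $a$ with the private messages at $b$, which is guaranteed by Assumption \ref{assump_3}, and the uniformity of the support distribution under random column assignment.
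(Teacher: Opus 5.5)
Your proposal is correct and follows essentially the same route as the paper. Both identify the leading coefficients $g(i,\mathcal{C}_a)$ and $h(j,n_{ab},\mathcal{C}_b)$ as counts of minimum-weight codewords through the position in question, with $h$ restricted to supports in the private positions, observe that shared nodes contribute only at higher order, and then compare the $\varepsilon^{d}$ coefficients of $P_{\text{sum}}^{ab}$ and $P_{\text{sum}}^{ba}$.

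The only difference is presentational. You first obtain the exact per-code criterion $\sum_{i>n_{ab}}(A_i^a-B_i^a)>\sum_{k>n_{ab}}(A_k^b-B_k^b)$ and then average over the random column assignment by linearity of expectation. The paper instead substitutes the ensemble-averaged values of $g$ and $h$ directly.
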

\begin{proof}[Proof of Theorem \ref{theo2}]
For the case $d_{\min}^a < d_{\min}^b$, the proof follows exactly the same procedure as that of Theorem \ref{theo1}. Therefore, we omit the repeated derivation and present the proof for the case $d_{\min}^a = d_{\min}^b$ below.

Assuming that the all-zero codeword is transmitted. We first analyze the decoding performance when the constraint node $a$ is updated before $b$. For the constraint node $a$ associated with a subcode $\mathcal{C}_a$, according to Lemma \ref{lemma_biterror_allp}, the erasure probability of the message passed from $a$ to $v_i^a$ is given by:
\begin{equation}
\begin{split}
 &P_{a \rightarrow v_i^a}=\sum_{\boldsymbol{m}_{\thicksim i}^a : m_{a \rightarrow v_i^a}=?}P(\boldsymbol{m}_{\thicksim i}^a|\boldsymbol{0}^{n_a-1})\\
 &=\sum_{\boldsymbol{m}_{\thicksim i}^a : m_{a \rightarrow v_i^a}=?,|\text{Er}(\boldsymbol{m}_{\sim i}^a)|\ge d_{\min}^a-1}P(\boldsymbol{m}_{\thicksim i}^a|\boldsymbol{0}^{n_a-1})
\end{split}
\end{equation}
where $|\text{Er}(\boldsymbol{m}_{\sim i}^a)|$ denotes the number of erased bits in the message \(\boldsymbol{m}_{\sim i}^a\). Moreover, based on (\ref{eq_1}), we have:

\begin{equation}
\begin{split}
    &\sum_{\boldsymbol{m}_{\thicksim i}^a : m_{a \rightarrow v_i^a}=?, |\text{Er}(\boldsymbol{m}_{\sim i}^a)|=d_{\min}^a-1}P(\boldsymbol{m}_{\thicksim i}^a|\boldsymbol{0}^{n_a-1}) \\
    &= \sum_{\boldsymbol{m}_{\thicksim i}^a : m_{a \rightarrow v_i^a}=?, |\text{Er}(\boldsymbol{m}_{\sim i}^a)|=d_{\min}^a-1}  \varepsilon^{d_{\min}^a-1} + o(\varepsilon^{d_{\min}^a-1}),
\end{split}
\end{equation}
and
\begin{equation}
\begin{split}
    &\sum_{\boldsymbol{m}_{\thicksim i}^a : m_{a \rightarrow v_i^a}=?, |\text{Er}(\boldsymbol{m}_{\sim i}^a)|>d_{\min}^a-1}P(\boldsymbol{m}_{\thicksim i}^a|\boldsymbol{0}^{n_a-1}) \\
    &= \sum_{\boldsymbol{m}_{\thicksim i}^a : m_{a \rightarrow v_i^a}=?, |\text{Er}(\boldsymbol{m}_{\sim i}^a)|>d_{\min}^a-1}  o(\varepsilon^{d_{\min}^a-1}).
\end{split}
\end{equation}
Therefore,
\begin{equation}
P_{a \rightarrow v_i^a}= g(i,\mathcal{C}_a)\, \varepsilon^{d_{\min}^a-1} + o(\varepsilon^{d_{\min}^a}-1),
\end{equation}
where \(g(i,\mathcal{C}_a)\) denotes the number of message vectors \(\boldsymbol{m}_{\sim i}^a\) such that \(m_{a \rightarrow v_i^a} = ?\) and \(|\operatorname{Er}(\boldsymbol{m}_{\sim i}^a)| = d_{\min}^a - 1\), i.e., the number of messages with exactly \(d_{\min}^a - 1\) erasures that satisfy Condition 1. 

This quantity is equal to the number of codewords $\boldsymbol{x}^a \in \mathcal{C}_{a}$ with Hamming weight $d_{\min}^a$ such that $x_i^a=1$. From an ensemble perspective, it can be expressed as
\begin{equation}
    g(i,\mathcal{C}_a) = \frac{A_{\min}^a}{{n_a \choose d_{\min}^a}}{n_a-1 \choose d_{\min}^a-1},
    \label{e_17}
\end{equation}
where $A_{\min}^a$ denotes the number of codewords in $\mathcal{C}_a$ of weight $d_{\min}^a$, the ratio $\frac{A_{\min}^a}{{n_a \choose d_{\min}^a}}$ represents the probability that a weight-$d_{\min}^a$ vector is a codeword. Furthermore, ${n_a-1 \choose d_{\min}^a-1}$ denotes the number of binary vectors of length $n_a$ and weight $d_{\min}^a$ whose $i$-th component is equal to 1.

After updating constraint node $a$, we proceed to update constraint node $b$. According to Lemma \ref{lemma_biterror_allp} and Theorem \ref{theo1}, the erasure probability of the message passed from $b$ to $v_j^b$ is given by:
\begin{equation}
\begin{split}
 &P_{b \rightarrow v_j^b}=\sum_{\boldsymbol{m}_{\thicksim j}^b : m_{b \rightarrow v_j^b}=?} \prod_{k \neq j} P_{v_k^b \rightarrow b}
\end{split}
\label{e_15}
\end{equation}
where
\begin{equation}
    P_{v_k^b \rightarrow b} = \begin{cases}
       P_{a \rightarrow v_k^b}\varepsilon, &k\in\{1,2,\ldots,n_{ab}\} \text{, and } m_{v_k \rightarrow b}=? \\
       1-P_{a \rightarrow v_k^b}\varepsilon, &k\in\{1,2,\ldots,n_{ab}\} \text{, and } m_{v_k \rightarrow b}=0 \\
       \varepsilon,&k\in\{n_{ab}+1,\ldots,n_{b}\}\text{, and } m_{v_k \rightarrow b}=?\\
       1-\varepsilon,&k\in\{n_{ab}+1,\ldots,n_{b}\}\text{, and } m_{v_k \rightarrow b}=0
    \end{cases}
    \label{e_16}
\end{equation}


From \eqref{e_15}–\eqref{e_16}, it holds that
\[\prod_{k \neq j} P_{v_k^b \rightarrow b}=\varepsilon^{d_{\min}^b-1}+o(\varepsilon^{d_{\min}^b-1})\]
if and only if exactly $d_{\min}^b-1$ messages $m_{v_k \rightarrow b}=?$ for $k \neq j$, all of which satisfy $k \in \{n_{ab}+1,\ldots,n_b\}$, and $\boldsymbol{m}_{\sim j}^b$ satisfies Condition 1. Otherwise,
\[
\prod_{k \neq j} P_{v_k^b \rightarrow b} = o\left(\varepsilon^{d_{\min}^b-1}\right).
\]
Therefore,
\begin{equation}
P_{b \rightarrow v_j^b}= h(j,n_{ab},\mathcal{C}_b)\, \varepsilon^{d_{\min}^b-1} + o(\varepsilon^{d_{\min}^b}-1),
\end{equation}
where \(h(j,n_{ab},\mathcal{C}_b)\) denotes the number of message vectors \(\boldsymbol{m}_{\sim j}^b\) that \(\boldsymbol{m}_{\sim j}^b\) satisfy Condition 1, have exactly \(|\operatorname{Er}(\boldsymbol{m}_{\sim j}^b)| = d_{\min}^b - 1\) erasures, and whose erasure set satisfies
\[\operatorname{Er}(\boldsymbol{m}^b) \subseteq \{n_{ab}+1 ,\ldots,n_b\} \cup \{j\}.\]

This quantity is equal to the number of codewords $\boldsymbol{x}^b \in \mathcal{C}_{b}$ with Hamming weight $d_{\min}^b$ such that $x_j^b=1$ and
\[\text{supp}(\boldsymbol{x}^b) \subseteq \{n_{ab}+1 ,\ldots,n_b\} \cup \{j\}.\]
From an ensemble perspective, it can be expressed as
\begin{equation}
    h(j,n_{ab},\mathcal{C}_b)=\begin{cases}
\frac{A_{\min}^b}{{n_b \choose d_{\min}^b}}{n_b-n_{ab} \choose d_{\min}^b-1}, & \text{if } j \in \{1,\ldots,n_{ab}\}, \\
\frac{A_{\min}^b}{{n_b \choose d_{\min}^b}}{n_b-n_{ab}-1 \choose d_{\min}^b-1}, &\text{ otherwise.}
\end{cases}
\label{e_17}
\end{equation}
where $A_{\min}^b$ denotes the number of codewords in $\mathcal{C}_b$ of weight $d_{\min}^b$. The ratio $\frac{A_{\min}^b}{{n_b \choose d_{\min}^b}}$ represents the probability that a weight-$d_{\min}^b$ vector is a codeword. Furthermore, the number of binary vectors $\boldsymbol{u}$ of length $n_b$ and weight $d_{\min}^b$ satisfying \( u_j^b = 1 \) and \( \operatorname{supp}(\boldsymbol{u}^b) \subseteq \{n_{ab}+1, \ldots, n_b\} \cup \{j\} \) is given by
\( {n_b-n_{ab} \choose d_{\min}^b-1} \quad \text{for } j \in \{1, \ldots, n_{ab}\},\) and \({n_b-n_{ab}-1 \choose d_{\min}^b-1}\) otherwise.

Therefore, we obtain the sum of the erasure probabilities over all variable nodes connected to \(a\) or \(b\) after sequentially updating constraint nodes \(a\) and \(b\)

\begin{equation}
\begin{split}
&P_{\text{sum}}^{ab}=\sum_{i=n_{ab}+1}^{n_a}g(i,\mathcal{C}_a)\varepsilon^{d_{\min}^a}+\sum_{j=n_{ab}+1}^{n_b}h(j,n_{ab},\mathcal{C}_b)\varepsilon^{d_{\min}^b}\\
&+o(\varepsilon^{d_{\min}^a})\\
 &=\bigg((n_a-n_{ab})\frac{A_{\min}^a}{{n_a \choose d_{\min}^a}}{n_a-1 \choose d_{\min}^a-1}\\
 & +(n_b-n_{ab}) \frac{A_{\min}^b}{{n_b \choose d_{\min}^b}}{n_b-n_{ab}-1 \choose d_{\min}^b-1}\bigg)\varepsilon^{d_{\min}^a}+o(\varepsilon^{d_{\min}^a}). 
\end{split}
\end{equation}

Similarly, \(P_{\text{sum}}^{ba}\) can be derived for the reverse updating order 
\begin{equation}
\begin{split}
&P_{\text{sum}}^{ab}=\bigg((n_b-n_{ab})\frac{A_{\min}^b}{{n_b \choose d_{\min}^b}}{n_b-1 \choose d_{\min}^b-1}\\
&+(n_a-n_{ab}) \frac{A_{\min}^a}{{n_a \choose d_{\min}^a}}{n_a-n_{ab}-1 \choose d_{\min}^a-1}\bigg)\varepsilon^{d_{\min}^a}+o(\varepsilon^{d_{\min}^a}). 
\end{split}
\end{equation}

When the condition $f(A_{\min}^a,d_{\min}^a,n_a,n_{ab})>f(A_{\min}^b,d_{\min}^b,n_b,n_{ab})$ holds, it follows that \(
P_{\text{sum}}^{ab} \ge P_{\text{sum}}^{ba}.\)
\end{proof}

The proof of Theorem \ref{theo2} follows similar arguments to that of Theorem \ref{theo1} and is omitted due to space limitations. These two theorems show that, for any two adjacent subcodes in the decoding order, the constraint node associated with the subcode having better distance properties should be updated first. For constraint nodes that are not adjacent in the sequence, the same objective can be achieved by successively exchanging the update order of multiple adjacent constraint nodes.

\begin{remark}
Although Theorems \ref{theo1}–\ref{theo2} are established under the assumption of the first iteration, simulation results suggest that similar qualitative conclusions also hold for subsequent iterations.
\end{remark}

\section{Simulation}

\begin{figure*}[htbp]
	\centering
	\begin{subfigure}{0.325\linewidth}
		\centering
		\includegraphics[width=1.1\linewidth]{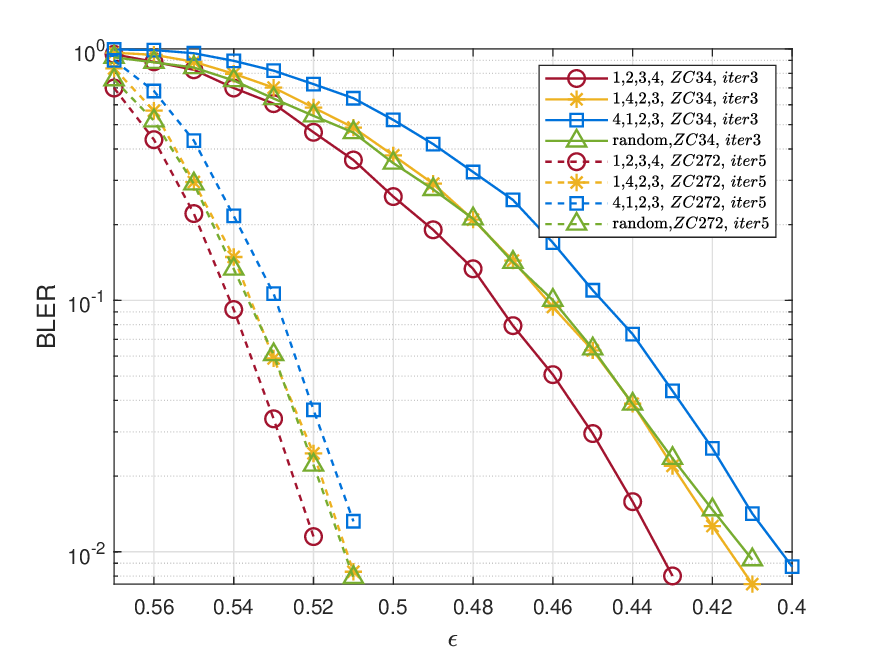}
		\caption{Simulation on the GLDPC code with (7,4,3) Hamming subcode over the BEC channel. \quad\quad\quad\quad}
		\label{fig_5}
	\end{subfigure}
	\centering
    \begin{subfigure}{0.325\linewidth}
		\centering
		\includegraphics[width=1.1\linewidth]{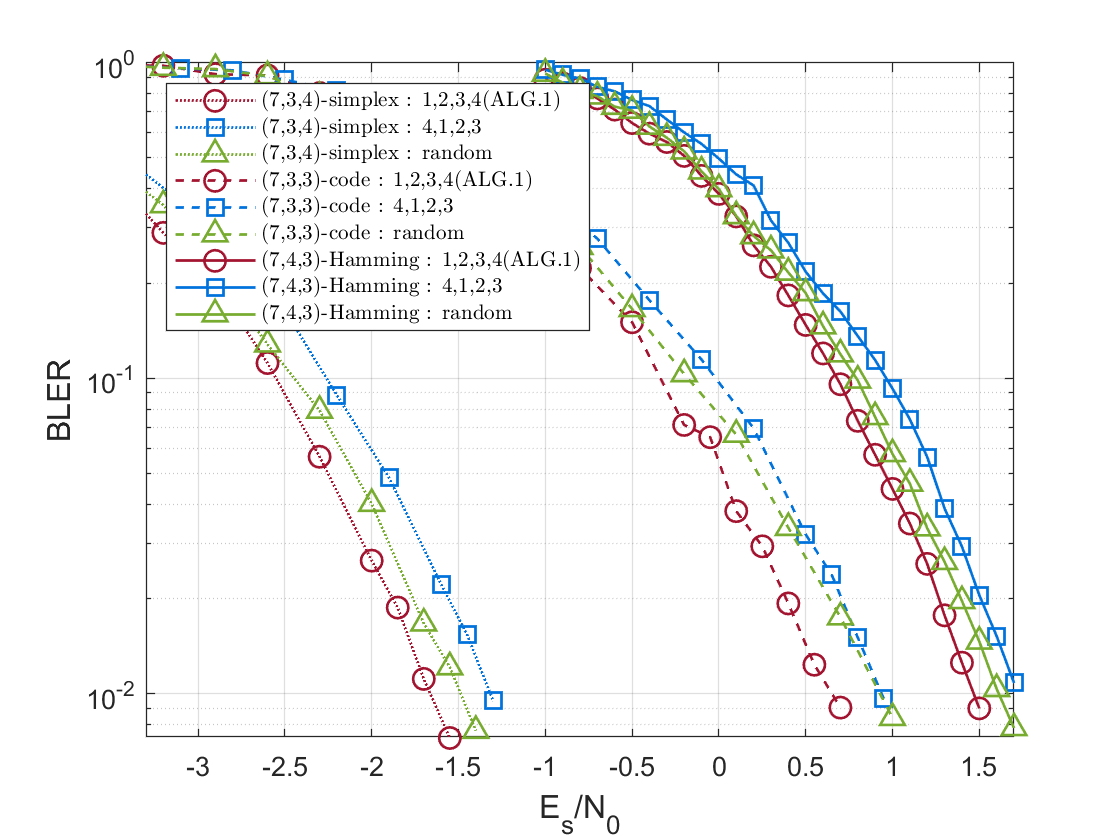}
		\caption{Simulation on GLDPC codes with lifting size ZC=34 and 3 iterations over the BI-AWGN channel.}
		\label{fig_2}
	\end{subfigure}
 \\
 
    \begin{subfigure}{0.325\linewidth}
		\centering
\includegraphics[width=1.1\linewidth]{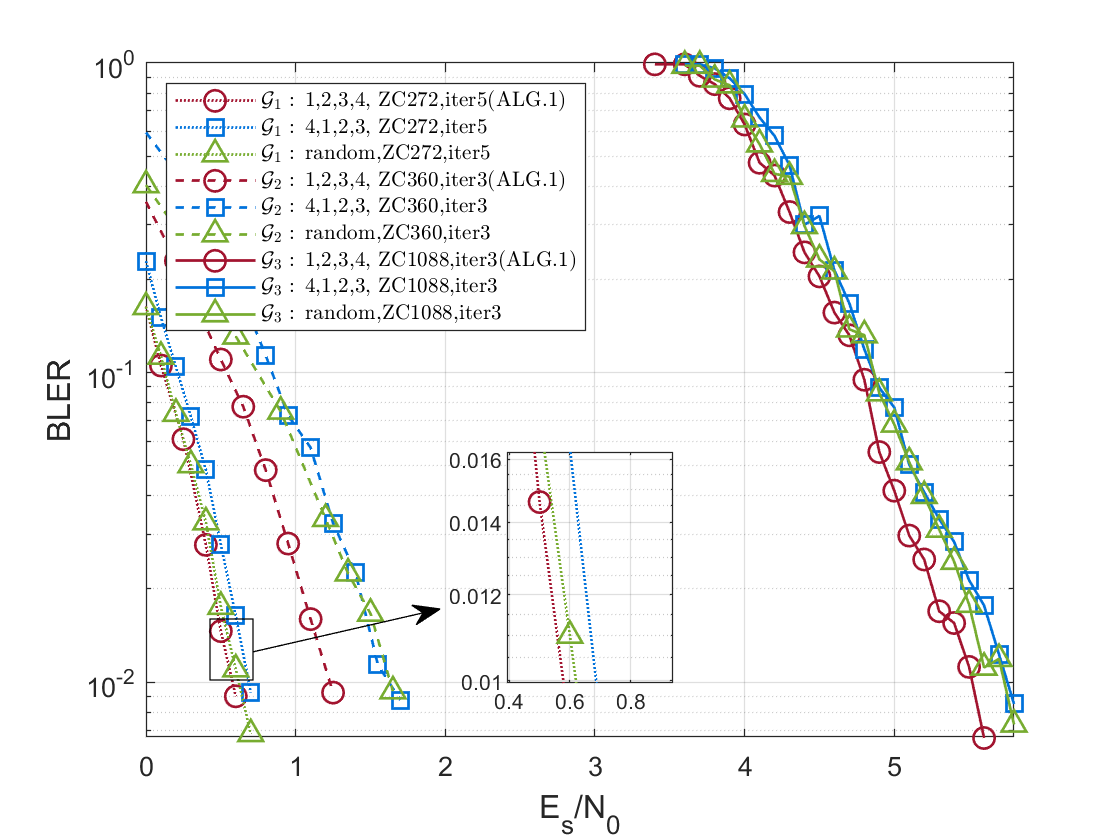}
		\caption{Simulations on GLDPC codes \(\mathcal{G}_1\), \(\mathcal{G}_2\), and \(\mathcal{G}_3\) over the BI-AWGN channel.}
		\label{fig_2}
	\end{subfigure}
\centering
    \begin{subfigure}{0.325\linewidth}
		\centering
\includegraphics[width=1.1\linewidth]{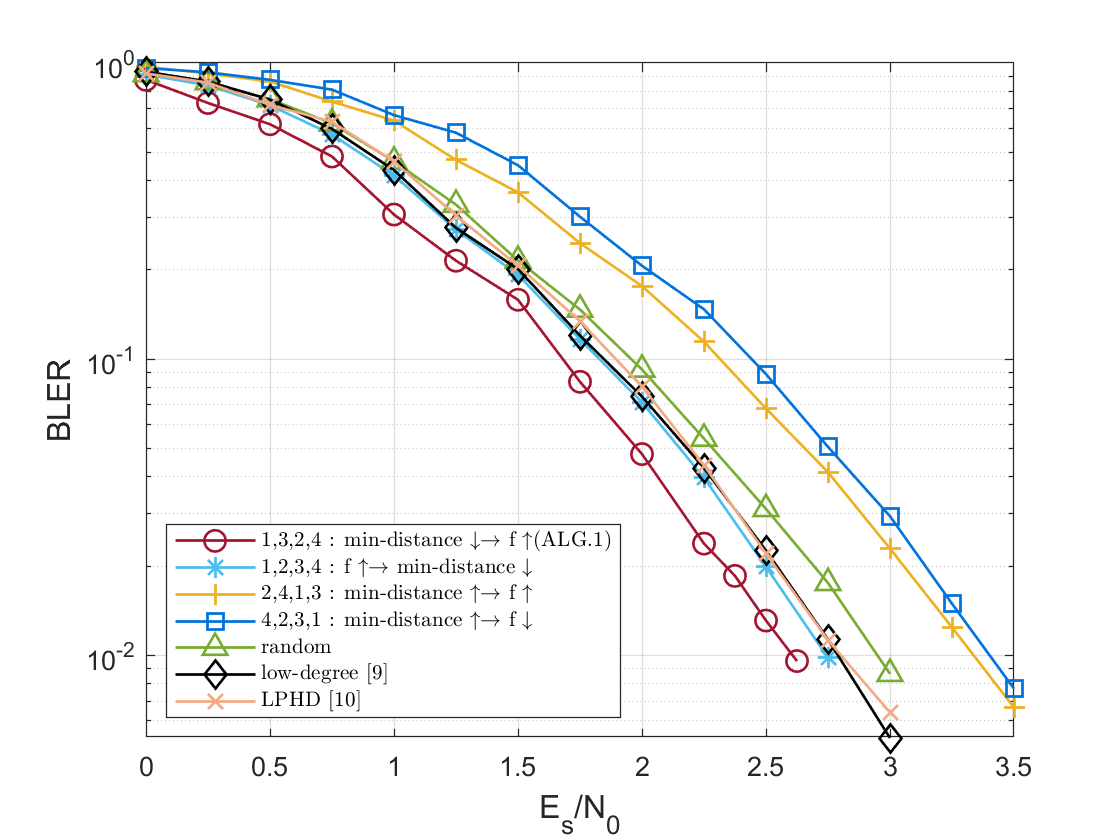}
		\caption{Simulation on GLDPC codes with lifting size ZC=45 and 3 iterations}
		\label{fig_2}
	\end{subfigure}
	\caption{BLER of the proposed scheduling algorithm and of layered decoding for the GLDPC codes over the BI-AWGN channel}
	\label{fig}
\end{figure*}

In this section, we design an algorithm to identify scheduling sequences that yield improved decoding performance based on Theorem 2 and Theorem 5.  As summarized in Algorithm 1, the constraint nodes are updated in descending order of the minimum distance of their associated subcodes. For nodes with equal minimum distance, the ordering follows ascending values of the function $f$, where constraint nodes with fewer minimum-weight codewords and shorter code length correspond to smaller values of $f$. Note that Algorithm 1 is an offline algorithm executed in advance, and the scheduling sequence $\beta$ is precomputed and stored without increasing the complexity of encoding and decoding. 

\begin{algorithm}
\caption{Hierarchical Distance Scheduling}
\begin{algorithmic}[1]
\Require the parity-check matrix and the subcodes associated with each constraint node
\Ensure scheduling sequence $\beta$

\State Compute the sequences \(d\) and \(A\) for all constraint nodes, representing the minimum distance and minimum-weight codeword count, respectively.
\State $\beta \leftarrow 1$;
\For{$i=2:$ length($d$)}
{
\State $\beta \leftarrow [\beta,i]$;
\For {$j =$length$(\beta):-1 :2$} 
    \If{$d_{\beta_j}>d_{\beta_{j-1}}$}
    \State swap $\beta_j$ and $\beta_{j-1};$
    \ElsIf{$f(A_{\beta_{j-1}},d_{\beta_{j-1}},n_{\beta_{j-1}},n_{\beta_{j}\beta_{j-1}})>f(A_{\beta_j},d_{\beta_j},n_{\beta_j},n_{\beta_j\beta_{j-1}})$ and $d_{\beta_j}=d_{\beta_{j-1}}$ } 
    \State  swap $\beta_j$ and $\beta_{j-1};$
    \Else
    \State break;
    \EndIf
\EndFor
}
\EndFor
\end{algorithmic}
\end{algorithm}

For quasi-cyclic low-density parity-check codes, let $P$ be the exponent matrix with entries \( P_{i,j} \in \{-1, 0, \ldots, ZC-1\} \), where \( ZC \) is the lifting size. The corresponding parity-check matrix is formed by replacing each entry \( P_{i,j}\) with a \( ZC \times ZC \) circulant permutation matrix with shift \( P_{i,j} \) if \( P_{i,j} \ge 0 \), and with the all-zero matrix otherwise. To describe the decoding schedule, we represent the scheduling sequence using the row indices of the exponent matrix. In this representation, each index $\alpha$ indicating that the lifted check code corresponding to the $\alpha$th row of the exponent matrix is updated at that position in the sequence.

In Fig. \ref{fig}, we present the simulation performance of the scheduling sequences designed by the proposed algorithm. The simulations are based on the GLDPC codes constructed in \cite{10619445}, designed to achieve relatively large minimum distances. Let the exponent matrices of the LDPC codes in Section III-B, Example 3 of Section III-A, and Section III-C in \cite{10619445} be denoted by \(G_{R_4}^{(1)}\), \(G_{R_4}^{(2)}\), and \(G_{R_4}^{(3)}\), respectively. The GLDPC codes used in the simulations are obtained by generalizing the constraint nodes based on these exponent matrices.

\begin{table}[htbp]
\caption{the exponent matrix of {$G_{R_4}^{(1)}$}}
\centering
\scalebox{0.9}{
 \begin{tabular}{|cc|cc|cc|cc|cc|cc|cc|}
\hline
0 & -1 & 0 & -1 & 0 & -1 & 0 & -1 & 0 & -1 & 0 & -1&0&-1 \\
-1 & 0 & -1 & 0 & -1 & 0 & -1 & 0 & -1 & 0&-1&0&-1&0 \\
\hline
0 & -1&-1 & 31 & -1 & 25 & 22 & -1&-1 & 1 & 23 &-1& 7&-1 \\
-1 & 0 & 30&-1 & 24 & -1 & -1 & 22 & 0 & -1&-1 & 23&-1&7 \\
\hline
\end{tabular}
}
\label{tab1}
\end{table}

\begin{table}[htbp]
\caption{the exponent matrix of {$G_{R_4}^{(2)}$}}
  \centering
  \scalebox{0.9}{
\begin{tabular}{|cc|cc|cc|cc|cc|cc|}
\hline
0 & -1 & 0 & -1 & 0 & -1 & 0 & -1 & 0 & -1 & 0 & -1 \\
-1 & 0 & -1 & 0 & -1 & 0 & -1 & 0 & -1 & 0&-1&0 \\
\hline
0 & -1&27 & -1 & 33 & -1 & -1& 36 & -1 & 28&-1& 35 \\
-1 & 0 &-1& 27&-1 & 33 & 35 & -1 & 27 & -1 & 34&-1 \\
\hline
\end{tabular}
  }
  \label{tab2}
\end{table}

\begin{table*}[htbp]
\caption{the exponent matrix of {$G_{R_4}^{(3)}$}}
  \centering
  \scalebox{0.9}{
\begin{tabular}{|cc|cc|cc|cc|cc|cc|cc|cc|cc|cc|cc|cc|cc|cc|cc|}
\hline
0 & -1 & 0 & -1 & 0 & -1 & 0 & -1 & 0 & -1 & 0 & -1&0 & -1 & 0 & -1 & 0 & -1 & 0 & -1 & 0 & -1 & 0 & -1 & 0 & -1& 0 & -1& 0 & -1\\
-1 & 0 & -1 & 0 & -1 & 0 & -1 & 0 & -1 & 0 & -1&0 & -1 & 0 & -1 & 0 & -1 & 0 & -1 & 0 & -1 & 0 & -1 & 0 & -1& 0 & -1& 0 & -1 &0\\
\hline
0 & -1&3&-1&9&-1&-1&12&-1&4&-1&11&-1&1&-1&2&-1&5&-1&14&-1&10&14&-1&8&-1&-1&9&10&-1\\
-1 & 0 &-1& 3&-1&9&11&-1&3&-1&10&-1&0&-1&1&-1&4&-1&13&-1&9&-1&-1&14&-1&8&8&-1&-1&10\\
\hline
\end{tabular}
  }
  \label{tab2}
\end{table*}

\begin{table}[htbp]
\caption{the exponent matrix of {$G_{R_4}^{(4)}$}}
  \centering
  \scalebox{0.9}{
\begin{tabular}{|cc|cc|cc|cc|cc|cc|}
\hline
0 & -1 & 0 & -1 & 0 & -1 & 0 & -1 & 0 & -1 & 0 & -1 \\
-1 & 0 & -1 & 0 & -1 & 0 & -1 & 0 & -1 & 0&-1&0 \\
\hline
0 & 4&27 & -1 & 33 & -1 & -1& 36 & -1 & 28&-1& 35 \\
12 & 0 &-1& 27&-1 & 33 & 35 & -1 & 27 & -1 & 34&-1 \\
\hline
\end{tabular}
  }
  \label{tab4}
\end{table}



In Fig. 1(a), we present the simulation results of the GLDPC codes obtained by generalizing the first three constraint nodes in the exponent matrix $G_{R_4}^{(1)}$ using (7,4,3) Hamming code over the BEC. The subcodes in the first three rows have a minimum distance of 3, while the subcode in the fourth row has a minimum distance of 2. For comparison, three illustrative scheduling sequences are selected based on the position of the fourth row in the decoding order: \((1,2,3,4)\), \((1,4,2,3)\), and \((4,1,2,3)\), along with a random sequence, where 'random' indicates that, for each decoding instance, one sequence is chosen uniformly at random from the 24 possible sequences. Simulations are conducted on GLDPC with lifting sizes $ZC$ of 34 and 272, corresponding to code rates of \(2/7\) and code lengths of 476 and \(3808\), respectively. The maximum number of decoding iterations is limited to three or five to evaluate performance under high-throughput scenarios. It can be observed that the scheduling sequence $(1,2,3,4)$, which prioritizes updating subcodes with better minimum-distance properties, demonstrates superior performance. The simulation results indicate that prioritizing the update of constraint nodes associated with subcodes of better distance properties yields superior decoding performance.

Although our analysis and simulations were carried out over the BEC, we expect similar conclusions to hold for more complex channels, such as the BI-AWGN channel. To verify this, we also conduct simulations over the BI-AWGN channel. In Fig. 1(b), we present the simulation results of the GLDPC codes obtained by generalizing the first three constraint nodes in the exponent matrix $G_{R_4}^{(1)}$ using different subcodes, including the (7,4,3) Hamming code, the (7,3,4) Simplex code (i.e., the dual of the Hamming code), and a (7,3,3) subcode of the Hamming code. The subcodes in the first three rows have a minimum distance of no less than 3, while the subcode in the fourth row has a minimum distance of 2. For comparison, two illustrative scheduling sequences are selected based on the position of the fourth row in the decoding order: \((1,2,3,4)\) and \((4,1,2,3)\), along with a random sequence. It can be observed that the scheduling sequence $(1,2,3,4)$, which prioritizes updating subcodes with better minimum-distance properties, demonstrates superior performance. The simulation results indicate that prioritizing the update of constraint nodes associated with subcodes of better distance properties yields superior decoding performance.

In Fig. 1(c), we present the performance of the GLDPC codes $\mathcal{G}_1$, $\mathcal{G}_2$, and $\mathcal{G}_3$, constructed by generalizing the first three rows of $G_{R_4}^{(1)}$, $G_{R_4}^{(2)}$, and $G_{R_4}^{(3)}$, respectively. The subcodes are the (7,4,3) Hamming code, the (6,3,3) Hamming code, and the (15,11,3) shortened Hamming code whose parity-check matrix is obtained by removing the all-one column from that of the (7,4,3) Hamming code. The scheduling sequence \((1,2,3,4)\), which prioritizes the constraint nodes with larger minimum distance, consistently achieves performance gains across different GLDPC codes, iteration numbers, and code lengths.

In Fig.~1(d), we construct an LDPC code \({G}_{R_4}^{(4)}\) and evaluate the performance of the corresponding GLDPC code \(\mathcal{G}_4\). The exponent matrix \({G}_{R_4}^{(4)}\) is obtained by adding two edges to \({G}_{R_4}^{(2)}\), where the shift values are chosen to minimize the shortest cycle in the Tanner graph. The GLDPC code \(\mathcal{G}_4\) is then constructed by generalizing the first row of \({G}_{R_4}^{(4)}\) into a \((6,3,3)\) shortened Hamming code and the third row into a \((7,4,3)\) Hamming code. Algorithm 1 outputs the scheduling sequence \((1,3,2,4)\). For comparison, the sequences \((1,2,3,4)\), \((4,2,3,1)\), and \((4,3,2,1)\), as well as a random sequence, and the low-degree \cite{frenzel2019static} and LPHD \cite{wang2020two} prioritized updating strategies, are also considered. The results show that \((1,3,2,4)\) achieves the best performance, followed by \((1,2,3,4)\). These two sequences are both constructed based on a lexicographic ordering of minimum distance (in descending order) and the
number of low-weight codewords and code length (in ascending order), but with different priority rules. In contrast, \((2,4,1,3)\) and \((4,2,3,1)\), which follow ascending minimum distance, exhibit the worst performance. This confirms that the minimum distance is the dominant factor, while the multiplicity of low-weight codewords and the code length serve as secondary factors.


\bibliographystyle{IEEEtran}
\bibliography{IEEEabrv,ref}

\end{document}